
\documentclass[letterpaper, 10 pt, conference]{ieeeconf}  

\IEEEoverridecommandlockouts                              

\overrideIEEEmargins                                      



\usepackage{graphics} 
\usepackage{epsfig} 
\usepackage{times} 
\usepackage{amsmath} 
\usepackage{amssymb}  
\usepackage{paralist}
\usepackage{algorithmic,algorithm}
\usepackage{subfig}
\usepackage{xcolor}
\usepackage[colorlinks=true,linkcolor=black]{hyperref}
\usepackage{comment}
\newtheorem{theorem}{Theorem}

\newtheorem{lemma}{Lemma}

\newtheorem{rem}{Remark}

\title{
Hybrid Feedback Control for Global and Optimal Safe Navigation
}

\author{Ishak Cheniouni, Soulaimane Berkane and Abdelhamid Tayebi 
	\thanks{This work was supported in part by the National Sciences and Engineering Research Council of Canada (NSERC), under the grants NSERC-DG RGPIN 2020-06270 and NSERC-DG RGPIN-2020-04759, and Fonds de recherche du Qu\'ebec (FRQ).} 
	\thanks{ The authors are with the Department of Electrical Engineering,  Lakehead University, Thunder Bay, ON P7B 5E1, Canada (e-mail:  cheniounii, atayebi@lakeheadu.ca)}
	\thanks{ Abdelhamid Tayebi is also with the Department of Electrical and Computer Engineering, Western University, London, ON N6A 3K7, Canada}%
	\thanks{S. Berkane is also with the Department Computer Science and Engineering,  University of Quebec in Outaouais, QC J8X 3X7, Canada (e-mail: soulaimane.berkane@uqo.ca).}%
}

\begin{document}

\maketitle
\thispagestyle{empty}
\pagestyle{empty}

\begin{abstract}
We propose a hybrid feedback control strategy that safely steers a point-mass robot to a target location {\it optimally from all initial conditions} in the $n$-dimensional Euclidean space with a single spherical obstacle. The robot moves straight to the target when it has a clear line-of-sight to the target location. Otherwise, it engages in an optimal obstacle avoidance maneuver via the shortest path inside the cone enclosing the obstacle and having the robot's position as a vertex. The switching strategy that avoids the undesired equilibria, leading to global asymptotic stability (GAS) of the target location, relies on using two appropriately designed virtual destinations, ensuring control continuity and shortest path generation. Simulation results illustrating the effectiveness of the proposed approach are presented.
\end{abstract}

\section{INTRODUCTION}
Autonomous robot navigation with obstacle avoidance has come a long way since the artificial potential field approach of Khatib \cite{khatib}. This pioneering approach, which considers the destination as an attractive force and the obstacles as repulsive forces, although simple and elegant, does not guarantee that the target (minimum of the potential) will be reached, as local minima are often generated. This problem was solved with the concept of the navigation function introduced in \cite{k_R_90}, where almost global asymptotic stability (AGAS) of the target location is achieved in sphere worlds. Using diffeomorphisms, the same results are obtained in more complex worlds  \cite{R_k_92}. A tuning-free navigation function has been constructed in point worlds for navigation in sphere and star worlds with AGAS guarantees \cite{LoizouNT4, LoizouNT3}.  In \cite{SnsNF6}, a sufficient condition is provided for valid navigation functions in environments with convex obstacles that are relatively flat with respect to their distance to the target. Under the same condition, the authors of \cite{Arslan2019} proposed an almost global reactive sensor-based approach that uses the Voronoi diagram to decompose the workspace where the robot cell is constructed by the intersection of the hyperplanes separating the robot from adjacent obstacles. The projection of the target on this convex cell plays the role of an intermediary local destination for the robot. This work has been extended to environments with star-shaped and polygonal obstacles with possible overlap in \cite{vasilopoulos1,Vasilopoulos2}. Path optimality has long been one of the main objectives of path planning algorithms, while feedback-based approaches such as those mentioned above generally do not prioritize the shortest path. In \cite{ACC23,Ishak2023}, we proposed a continuous feedback control strategy in sphere worlds generating quasi-optimal paths. Unfortunately, sets of non-zero measure, from where the undesired equilibria may be reached, may exist unless some restrictions on the obstacles configuration in the workspace are enforced. A sensor-based version of this approach, with AGAS guarantees of the target location, has also been proposed in \cite{Ishak2023} to deal with convex obstacles satisfying a flatness condition similar to the one in \cite{SnsNF6,Arslan2019}. A global result is out of reach for the above-cited works, involving smooth controllers, due to topological obstruction pointed out in \cite{k_R_90}. As an alternative, hybrid feedback controllers have been proposed in the literature to achieve GAS results. The work in \cite{HybBerkaneECC2019} achieves GAS of the target location in Euclidean spaces with a single spherical obstacle. In \cite{SoulaimaneHybTr}, an extension was proposed for multiple ellipsoidal obstacles in Euclidean spaces. Recently, a hybrid feedback control for safe and global navigation in two-dimensional environments with arbitrary convex obstacles was proposed in \cite{Mayur2022}. A similar control scheme was proposed more recently in \cite{Mayur2023} for environments with non-convex obstacles. These hybrid feedback strategies endowed with GAS guarantees do not take path length optimality into account in their design. This lack of optimality motivates us to leverage the results achieved in \cite{Ishak2023} together with the hybrid feedback techniques \cite{HybBerkaneECC2019} to design a feedback control scheme that enjoys GAS properties, while generating optimal obstacle avoidance trajectories, in terms of the Euclidean distance, in an $n$-dimensional Euclidean space with a single spherical obstacle. 

\section{Notations and Preliminaries}
Throughout the paper, $\mathbb{N}$ and $\mathbb{R}$ denote the set of natural numbers and real numbers, respectively. The Euclidean space and the unit $n$-sphere are denoted by $\mathbb{R}^n$ and $\mathbb{S}^n$, respectively. The Euclidean norm of $x\in\mathbb{R}^n$ is defined as $\|x\|:=\sqrt{x^\top x}$ and the angle between two non-zero vectors $x,y\in\mathbb{R}^n$ is given by $\angle (x,y):=\cos^{-1}(x^\top y/\|x\|\|y\|)$ . 
We define the ball centered at $x\in\mathbb{R}^n$ and of radius $r>0$ by the set $\mathcal{B}(x,r):=\left\{q\in\mathbb{R}^n|\;\|q-x\| \leq r\right\}$. The interior, the boundary, and the closure of a set $\mathcal{A}\subset\mathbb{R}^n$ are denoted by $\mathring{\mathcal{A}}$, $\partial\mathcal{A}$, and $\overline{\mathcal{A}}$, respectively. The relative complement of a set $\mathcal{B}\subset\mathbb{R}^n$ with respect to a set $\mathcal{A}$ is denoted by $\mathcal{B}^c_\mathcal{A}$. 
The line passing by two points $x,y\in\mathbb{R}^n$ is defined as $\mathcal{L}(x, y):=\left\{q\in\mathbb{R}^n|q=x+\delta(y-x),\;\delta\in\mathbb{R}\right\}$. The elementary reflector, parallel projection, and orthogonal projection are, respectively, defined as follows \cite{Matrix_Analysis_and_Applied_Linear_Algebra}:
\begin{align*}
    \rho(v):=I_n-2vv^\top,\,\pi^{\parallel}(v):=vv^\top,\,\pi^{\bot}(v)&:=I_n-vv^\top,
\end{align*}
where $I_n\in\mathbb{R}^{n\times n}$ is the identity matrix and $v\in\mathbb{S}^{n-1}$. Therefore, for any vector $x$, the vectors $\rho(v)x$, $\pi^{\parallel}(v) x$ and $\pi^{\bot}(v) x$ correspond, respectively, to the reflection of $x$ about the hyperplane orthogonal to $v$, the projection of $x$ onto the line generated by $v$, and the projection of $x$ onto the hyperplane orthogonal to $v$.
Let us define the set $\mathcal{P}_{\Delta}(x,v)=\left\{q\in\mathbb{R}^n|v^\top(q-x)~\Delta~0\right\}$, with $\Delta \in\{=,>,\geq,<,\leq\}$.
The hyperplane passing through $x\in\mathbb{R}^n$ and orthogonal to $v\in\mathbb{R}^n\setminus\{0\}$ is denoted by $\mathcal{P}_{=}(x,v)$. The closed negative half-space (resp. open negative half-space) is denoted by $\mathcal{P}_{\leq}(x,v)$ (resp. $\mathcal{P}_{<}(x,v)$) and the closed positive half-space (resp. open positive half-space) is denoted by $\mathcal{P}_{\geq}(x,v)$ (resp. $\mathcal{P}_{>}(x,v)$). A conic subset of $\mathcal{A}\subseteq\mathbb{R}^n$, with vertex $x\in\mathbb{R}^n$, axis $a\in\mathbb{R}^n$, and aperture $2\varphi$ is defined as follows \cite{HybBerkaneECC2019}:
\begin{align}
    \mathcal{C}^{\Delta}_{\mathcal{A}}(x,a,\varphi):=\left\{q\in\mathcal{A}|\|a\|\|q-x\|\cos(\varphi)\Delta a^\top(q-x)\right\},
\end{align}
where $\varphi\in(0,\frac{\pi}{2}]$ and $\Delta\in\left\{\leq,<,=,>,\geq\right\}$,  with $``="$, representing the surface of the cone, $``\leq"$ (resp. $``<"$) representing the interior of the cone including its boundary (resp. excluding its boundary), and $``\geq"$ (resp. $``>"$) representing the exterior of the cone including its boundary (resp. excluding its boundary). We state a property of cones sharing the same vertex as follows \cite[Lemma~1]{HybBerkaneECC2019}:
\begin{lemma}\label{lem1}
Let $c,a_{-1},a_{1}\in\mathbb{R}^n$ such that $\angle(a_{-1},a_{1})=\psi$ where $\psi\in(0,\pi]$. Let $\varphi_{-1},\varphi_{1}\in[0,\pi]$ such that $\varphi_{-1}+\varphi_{1}<\psi<\pi-(\varphi_{-1}+\varphi_{1})$. Then 
\begin{align}
    \mathcal{C}_{\mathbb{R}^n}^{\leq}(c,a_{-1},\varphi_{-1})\cap\mathcal{C}_{\mathbb{R}^n}^{\leq}(c,a_{1},\varphi_{1})=\{c\}.
\end{align}
\end{lemma}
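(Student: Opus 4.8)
The plan is to translate the set identity into a statement about angles between directions and then to close it with the triangle inequality for the angular (geodesic) metric on $\mathbb{S}^{n-1}$. The inclusion $\{c\}\subseteq\mathcal{C}_{\mathbb{R}^n}^{\leq}(c,a_{-1},\varphi_{-1})\cap\mathcal{C}_{\mathbb{R}^n}^{\leq}(c,a_{1},\varphi_{1})$ is immediate: substituting $q=c$ in the defining inequality of each cone yields $0\le 0$, so $c$ lies in both. The content is therefore the reverse inclusion, which I would prove by contradiction.

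Suppose some $q$ with $q\neq c$ belongs to both cones, and set $u:=q-c\neq 0$. Since $a_{-1},a_1\neq 0$ (as $\angle(a_{-1},a_1)$ is defined) and $\|u\|>0$, dividing the cone inequality $\|a_i\|\,\|u\|\cos\varphi_i\le a_i^\top u$ by $\|a_i\|\,\|u\|$ and using that $\cos^{-1}$ is decreasing on $[0,\pi]$ turns membership into the angular conditions $\angle(a_{-1},u)\le\varphi_{-1}$ and $\angle(a_1,u)\le\varphi_1$. Thus a nonzero intersection point is precisely a direction that is simultaneously within $\varphi_{-1}$ of $a_{-1}$ and within $\varphi_1$ of $a_1$.

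The crux is the angular triangle inequality $\angle(a_{-1},a_1)\le\angle(a_{-1},u)+\angle(u,a_1)$, which I would establish directly rather than cite. Writing the unit vectors as $a_{-1}/\|a_{-1}\|=\cos\alpha\,\hat u+\sin\alpha\,e$ and $a_1/\|a_1\|=\cos\beta\,\hat u+\sin\beta\,f$, where $\hat u=u/\|u\|$, $\alpha=\angle(a_{-1},u)$, $\beta=\angle(a_1,u)$, and $e,f\perp\hat u$ are unit vectors, a short computation gives
\[
\frac{a_{-1}^\top a_1}{\|a_{-1}\|\,\|a_1\|}=\cos\alpha\cos\beta+\sin\alpha\sin\beta\,(e^\top f)\ge\cos\alpha\cos\beta-\sin\alpha\sin\beta=\cos(\alpha+\beta),
\]
because $e^\top f\ge-1$ and $\sin\alpha,\sin\beta\ge0$ on $[0,\pi]$. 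Since $\alpha+\beta\le\varphi_{-1}+\varphi_1<\psi\le\pi$, monotonicity of $\cos^{-1}$ yields $\psi=\angle(a_{-1},a_1)\le\alpha+\beta\le\varphi_{-1}+\varphi_1$, contradicting the hypothesis $\varphi_{-1}+\varphi_1<\psi$. Hence no such $q$ exists and the intersection reduces to $\{c\}$.

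The main obstacle is exactly this angular triangle inequality, together with the degenerate cases it hides: when $u$ is parallel (or antiparallel) to $a_{-1}$ or $a_1$ the orthogonal unit vectors $e,f$ are not well defined, but the corresponding $\sin$ factor vanishes, so the ambiguous term drops and the bound persists; I would dispatch these cases with a one-line remark. I also note that only the left inequality $\varphi_{-1}+\varphi_1<\psi$ is actually used here, while the right inequality $\psi<\pi-(\varphi_{-1}+\varphi_1)$ plays no role in this particular identity.
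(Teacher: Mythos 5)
Your proof is correct, but there is nothing in the paper to compare it against: the paper does not prove Lemma~\ref{lem1} at all, it imports the statement verbatim from \cite[Lemma~1]{HybBerkaneECC2019}. Your argument is a valid, self-contained replacement. The reduction of membership in $\mathcal{C}_{\mathbb{R}^n}^{\leq}(c,a_i,\varphi_i)$ for $q\neq c$ to the angular condition $\angle(a_i,q-c)\le\varphi_i$ is exactly right (division by $\|a_i\|\|q-c\|>0$ is legitimate since the hypotheses force $a_i\neq 0$), the verification that $c$ itself lies in both cones handles the only point where that reduction fails, and your direct proof of the spherical triangle inequality via the decomposition $a_i/\|a_i\|=\cos(\cdot)\,\hat u+\sin(\cdot)\,(\textrm{unit vector}\perp\hat u)$ is sound, including the observation that the degenerate collinear cases are harmless because the vanishing sine factor removes the ill-defined direction; the monotonicity step is also applied on the correct interval, since you first establish $\alpha+\beta\le\varphi_{-1}+\varphi_1<\psi\le\pi$. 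Your closing remark is likewise accurate: the set identity uses only $\varphi_{-1}+\varphi_1<\psi$. The upper bound $\psi<\pi-(\varphi_{-1}+\varphi_1)$ is a separate, stronger requirement---by your same argument applied with $a_1$ replaced by $-a_1$, it is equivalent to the first cone also meeting the antipodal reflection of the second cone only at $c$---which is enforced in the design condition \eqref{phi} but is not needed for the equality claimed in the lemma.
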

A hybrid dynamical system \cite{Sanfelice} is represented by
\begin{align} {\begin{cases}\dot{X}\in \mathrm{F}(X), &X\in \mathcal {F}\\ X^+\in \mathrm{J}(X), & X\in \mathcal {J} \end{cases}} \label{hyb} \end{align}
 where $X \in \mathbb{R}^n$ is the state, the (set-valued) flow map $\mathrm{F} : \mathbb{R}^n \rightrightarrows \mathbb{R}^n$
and jump map $\mathrm{J} : \mathbb{R}^n \rightrightarrows \mathbb{R}^n$ govern continuous and discrete evolution,
which can occur, respectively, in the flow set $\mathcal{F} \subset \mathbb{R}^n$ and the jump set
$\mathcal{J} \subset \mathbb{R}^n$. The notions of solution $\phi$ to a hybrid system, its hybrid time domain $\text{dom}\,\phi$, maximal and complete solution are, respectively, as
in \cite[Def. 2.6, Def. 2.3, Def. 2.7, p. 30]{Sanfelice} .
\section{Problem Formulation}
We consider a point mass vehicle $x\in\mathbb{R}^n$ and a spherical obstacle $\mathcal{O}:=\mathcal{B}(c,r)$. The free space is, therefore, given by
\begin{align}\label{1}
\mathcal{X}:=\mathbb{R}^n\setminus\mathring{\mathcal{O}}.
\end{align}
We consider the following first-order vehicle dynamics
\begin{align}\label{12}
    \Dot{x}=u,
\end{align}
where $u$ is the control input. The objective is to determine a state feedback controller $u(x)$ that safely and optimally steers the vehicle from any initial position $x(0)\in\mathcal{X}$ to any given desired destination $x_d\in\mathring{\mathcal{X}}$. In particular, the closed-loop system
\begin{equation}\label{eq:closed-loop-system}
    \dot x=u(x),\quad x(0)\in\mathcal{X}
\end{equation}
must ensure forward invariance of the set $\mathcal{X}$, global asymptotic stability of the equilibrium $x=x_d$, and the trajectory $x(t)$ must generate the shortest path from $x(0)$ to $x_d$.
\section{Sets Definition}\label{section:sets}
In this section, we define the subsets of the free space that are necessary for the control design that we propose in Section \ref{section:control-design}. These subsets are illustrated in Fig. \ref{fig:set_def} and given as follows:
\begin{itemize}
    \item The \textit{shadow region} is the area where the vehicle does not have a clear line-of-sight to the target. This is defined as follows:
    \begin{multline}\label{18}
        \mathcal{S}(x_d):=\bigl\{q\in\mathcal{C}^{\leq}_{\mathcal{X}}(x_d,c-x_d,\theta(x_d))|\\(c-q)^\top(x_d-q)\geq0\bigr\},
    \end{multline}
    where the function $\theta(q)\colon\mathcal{X}\to (0,\frac{\pi}{2}]$, $q\mapsto\theta(q):=\arcsin(r/\|q-c\|)$ assigns to each position $q$ of the free space, the half aperture of the cone enclosing obstacle $\mathcal{O}$.
    \item The \textit{exit set} is the lateral surface of the shadow region  and is defined as follows:
    \begin{multline}\label{12}
        \mathcal{E}(x_d):=\bigl\{q\in\mathcal{C}^{=}_{\mathcal{X}}(x_d,c-x_d,\theta(x_d))|\\(c-q)^\top(x_d-q)\geq0\bigr\},
    \end{multline}
    \item The \textit{visible set} which is defined by $\mathcal{V}(x_d):=\mathcal{S}_{\mathcal{X}}^c(x_d)$.
\end{itemize}
\begin{figure}[!h]
    \centering
    \includegraphics[scale=0.35]{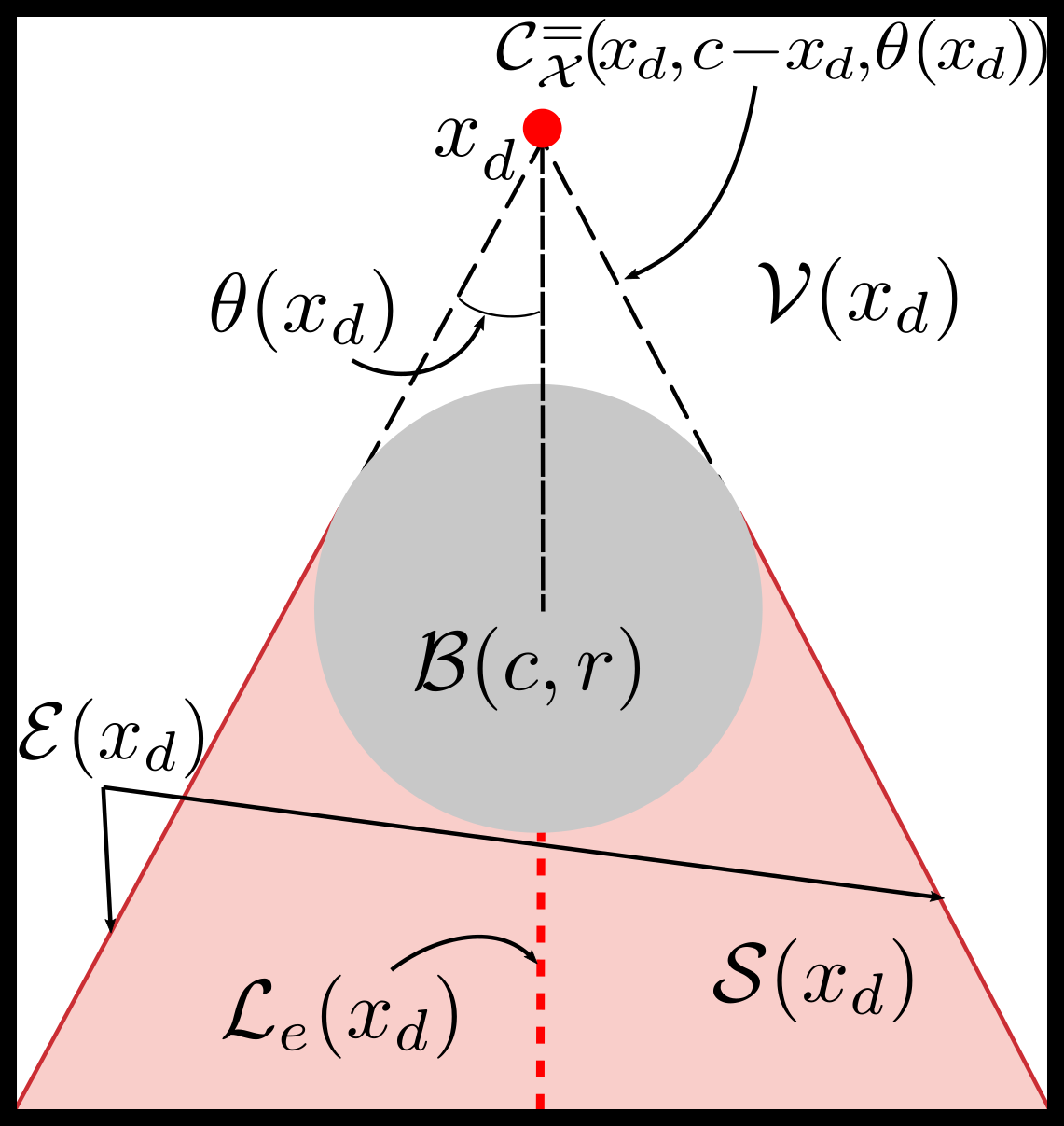}
    \caption{Illustration of the different subsets of the free space defined in Section \ref{section:sets}.}
    \label{fig:set_def}
\end{figure}
\section{Control Design}\label{section:control-design}
The objective is to reach the target safely by the shortest path from everywhere in the free space $\mathcal{X}$. Such an approach was proposed in \cite{Ishak2023}, where almost global asymptotic stability of a target location is achieved in a workspace with a single spherical obstacle through the following control strategy:
\begin{align}\label{18}
    u(x)=\begin{cases}
        u_d(x),&x\in\mathcal{V}(x_d)\\
        u_d(x)-\tau(x)\frac{c-x}{\|c-x\|},&x\in\mathcal{S}(x_d)
    \end{cases}
\end{align}
where $\tau(x)=\|u_d(x)\|\sin(\theta(x)-\beta(x))\sin^{-1}(\theta(x))$ with $\theta(x)$ is the half aperture of the cone enclosing the obstacle, $\beta(x)=\angle(c-x,u_d(x))$, $u_d(x)=\gamma(x_d-x)$ is the nominal controller, and $\gamma>0$. The set of equilibria of the closed-loop system \eqref{eq:closed-loop-system}-\eqref{18} is given by $\zeta=\{x_d\}\cup\mathcal{L}_e(x_d)$ where $\mathcal{L}_e(x_d):=\mathcal{L}(x_d,c)\cap\mathcal{S}(x_d)$ is the set of undesired equilibria. The control generates shortest paths and makes the equilibrium point $x_d$ AGAS. Although unstable, the presence of undesired equilibria can lead to performance degradation and robustness issues \cite{berkane2017design}. Inspired by \cite{HybBerkaneECC2019}, we propose a hybrid version of the control \eqref{18}, endowed with GAS, which will be presented next.
\subsection{Hybrid Control}
Consider obstacle $\mathcal{O}$ and the hybrid dynamical system \eqref{hyb}. We introduce a mode indicator $m\in\{-1,0,1\}$ where $m=0$ refers to the {\it straight} mode (when $x\in\mathcal{V}(x_d)$) and $m\in\{-1,1\}$ refers to the {\it projection} mode (when $x\in\mathcal{S}(x_d)$). We define two virtual destinations $x_d^m\in\mathcal{C}^{=}(x_d,c-x_d,\theta(x_d))\setminus\mathcal{E}(x_d)$ on the hat of the cone enclosing obstacle $\mathcal{O}$ where $\|x_d-x_d^m\|=e$ and $x_d^{-m}=x_d-\rho(\frac{c-x_d}{\|c-x_d\|})(x_d^m-x_d)$ for $m\in\{-1,1\}$ with a design parameter $e>0$. For each destination, we proceed with the  construction of the control law similar to \eqref{18} for each shadow region (see Fig. \ref{fig:hybShadow}). We propose the following feedback control law depending on the current position $x\in\mathbb{R}^n$ and the mode $m\in\{-1,0,1\}$:
\begin{align}\label{hyb_ctrl}
    u(x,m)&=|m|\mu(x,m)u_m(x)+(1-|m|)u_d^{m}(x),
\end{align}
where $u_m(x)=u_d^m(x)-\tau_m(x)\frac{c-x}{\|c-x\|}$, $u_d^m(x)=\gamma(x_d^m-x)$, $x_d^0=x_d$, $\tau_m(x)=\|u_d^m(x)\|\sin(\theta(x)-\beta_m(x))\sin^{-1}(\theta(x))$, $\beta_m(x)=\angle(c-x,u_d^m(x))$, and $\mu(x,m)=\left(1+\frac{e}{\|x-x_d^m\|}\frac{\beta_m(x)}{\theta(x)}\right)$. In the {\it straight} mode ($m=0$), the control law $u(x,0)=\gamma(x_d-x)$ steers the robot straight to the target. In the {\it projection} mode ($m=\pm1$), the control input $u(x,m)=\mu(x,m)u_m(x)$ selects the shortest path within the cone enclosing obstacle $\mathcal{O}$. The idea behind the scaling function $\mu(x,m)$ and the placement of the virtual destination on the hat of the cone enclosing obstacle $\mathcal{O}$, will be discussed later. Let us look for the equilibria of the closed-loop system in the {\it projection} mode. By setting $u(x,m)=0$ for $m\in\{-1,1\}$ and $x\in\mathcal{S}(x_d^m)$, one obtains $\mu(x,m)=0$ or $u_m(x)=0$, but the first term is always positive $\mu(x,m)>1$. Therefore, $u(x,m)=0$, for $m\in\{-1,1\}$ and $x\in\mathcal{S}(x_d^m)$, if and only if $u_m(x)=0$ which corresponds to the undesired equilibria of the closed-loop system $\eqref{18}$ for the virtual destinations. One can conclude that the equilibria of the {\it projection} mode are $\mathcal{L}_e(x_d^m)$, $m\in\{-1,1\}$. 
\begin{figure}[!h]
\centering
\includegraphics[scale=0.3]{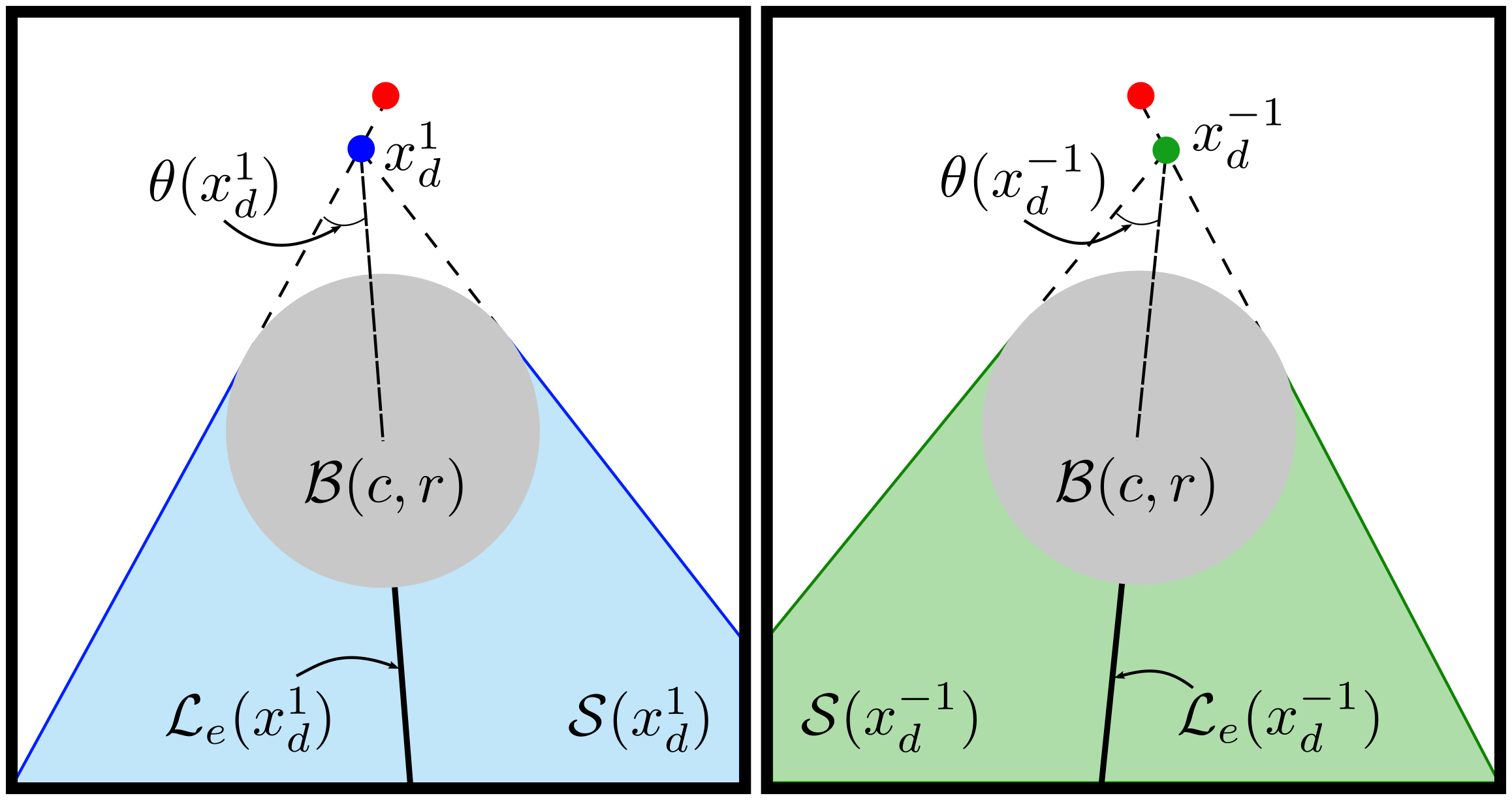}
\caption{Shadow regions of the virtual destinations.}
\label{fig:hybShadow}
\end{figure}
We define the flow and jump sets depicted in Fig. \ref{fig:flow_jump_sets} as follows:
\begin{subequations}\label{flow_jump_sets}
    \begin{align}
    \mathcal{F}_0&:=\overline{\mathcal{V}}(x_d),\;\mathcal{J}_0:=\mathcal{S}(x_d),\label{set1}\\
   \mathcal{F}_{1}&:=\mathcal{S}(x_d^{1})\setminus\mathcal{C}_{\mathcal{X}}^{<}(c,v_1,\varphi_1),\mathcal{J}_{1}:=\overline{\mathcal{F}_{1_{\mathcal{X}}}^{c}},\label{set2}\\
    \mathcal{F}_{-1}&:=\mathcal{S}(x_d^{-1})\setminus\mathcal{C}_{\mathcal{X}}^{<}(c,v_{-1},\varphi_{-1}),\mathcal{J}_{-1}:=\overline{\mathcal{F}_{-1_{\mathcal{X}}}^{c}},\label{set3}\\
    \mathcal{F}&:=\bigcup\limits_{i=-1}^{1}\left(\mathcal{F}_i\times\{m\}\right),\;\mathcal{J}:=\bigcup\limits_{i=-1}^{1}\left(\mathcal{J}_i\times\{m\}\right)\label{set4}
\end{align}
\end{subequations}
where $v_m:=c-x_d^m$ for $m\in\{-1,1\}$. For the well-posedness of the closed-loop hybrid system we design the angles $\varphi_1,\varphi_{-1}$ in \eqref{flow_jump_sets} as per Lemma \ref{lem1} as follows:
\begin{equation}\label{phi}
    \varphi_1=\varphi_{-1}=\varphi<\min\left\{\frac{\angle{(v_1,v_{-1})}}{2},\frac{\pi-\angle{(v_1,v_{-1})}}{2}\right\}.
\end{equation}
\begin{figure}[!h]
\centering
\includegraphics[scale=0.3]{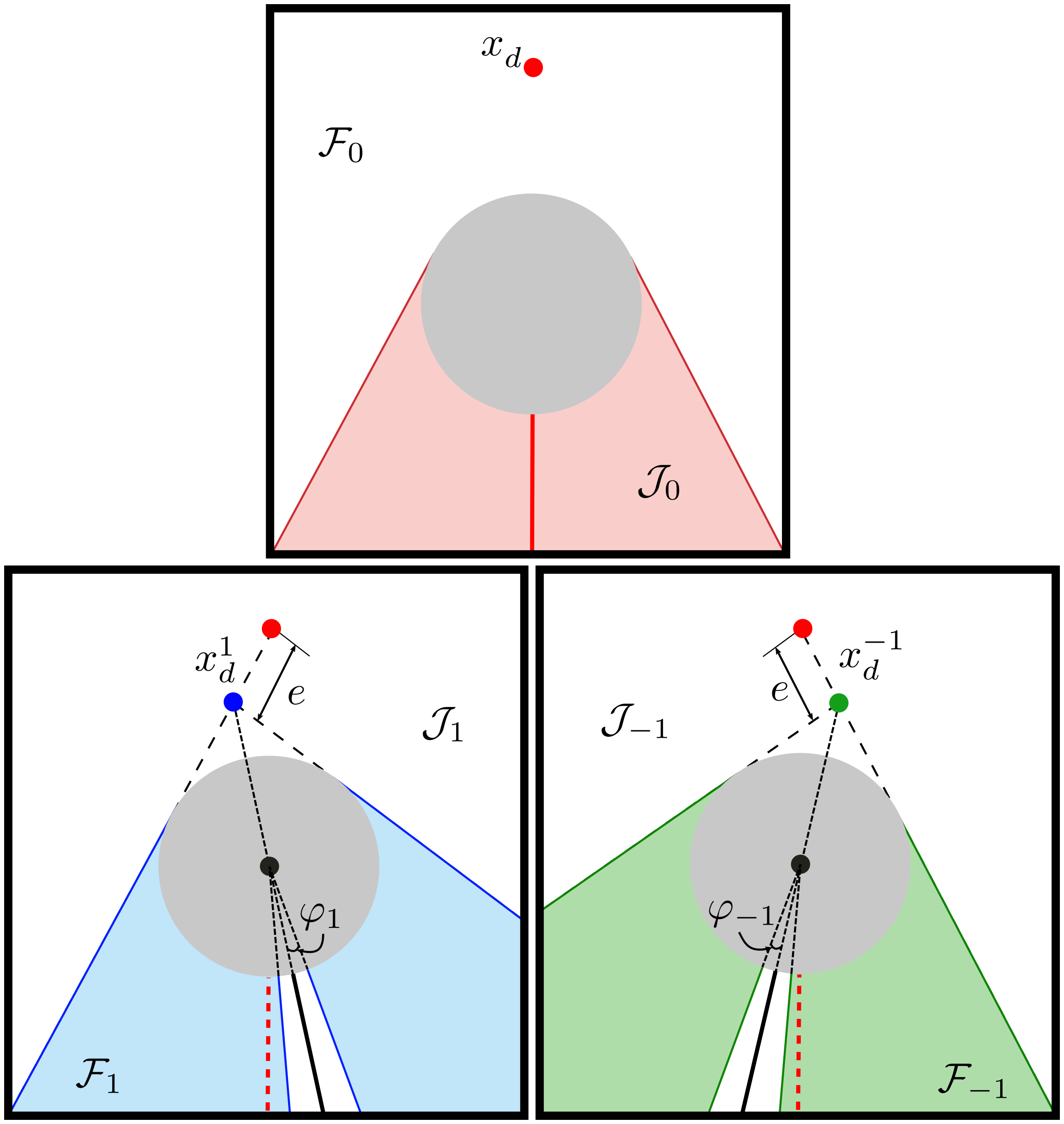}
\caption{The flow and jump sets for the proposed dynamical hybrid system.}
\label{fig:flow_jump_sets}
\end{figure}
The mode variable $m$ is selected according to the following scheme:
\begin{align}\label{jump_dyn}
    &\dot{m}=0,&&(x,m)\in\mathcal{F},\\
    &m^{+}\in M(x,m),&&(x,m)\in\mathcal{J},
\end{align}
where the jump map $M(\cdot)$ is defined as:
\begin{align}\label{jump_map}
    M(x,m):=\begin{cases}
    0 &x\in \mathcal{J}_m,\;m\in\{-1,1\},\\
    B(x)&x\in \mathcal{J}_0,\; m=0,
      \end{cases}
\end{align}
and $B(\cdot)$ is defined as:
\begin{align}\label{jump_map_B}
    B(x):=\begin{cases}
    1&x\in \mathcal{C}_{\mathbb{R}^n}^{\geq}(c,v_1,\varphi_1)\setminus\mathcal{C}_{\mathbb{R}^n}^{\geq}(c,v_{-1},\varphi_{-1}),\\
    -1 &x\in \mathcal{C}_{\mathbb{R}^n}^{\geq}(c,v_{-1},\varphi_{-1})\setminus\mathcal{C}_{\mathbb{R}^n}^{\geq}(c,v_1,\varphi_1),\\
    \{-1,1\}&x\in \mathcal{C}_{\mathbb{R}^n}^{\geq}(c,v_1,\varphi_1)\cap\mathcal{C}_{\mathbb{R}^n}^{\geq}(c,v_{-1},\varphi_{-1}).
      \end{cases}
\end{align}
The hybrid closed-loop system can be written as follows:
\begin{subequations}\label{Cl_ctrl}
    \begin{align}
    &\begin{cases}
    \Dot{x}=u(x,m)\\
    \Dot{m}=0
    \end{cases}
    &&(x,m)\in\mathcal{F},\label{CL_hyb1}\\
   &\begin{cases}
    x^{+}=x\\
    m^{+}\in M(x,m)
    \end{cases}
    &&(x,m)\in\mathcal{J},\label{CL_hyb2}
\end{align}
\end{subequations}
and its representation with the hybrid data is given by $\mathcal{H}:=(\mathcal{F},\mathrm{F},\mathcal{J},\mathrm{J})$ where $\mathrm{F}(x,m)=\left(u(x,m),0)\right)$ is the flow map and $\mathrm{J}(x,m)=\left(x,M(x,m)\right)$ is the jump map.
\section{main result}
In this section, we establish the safety and stability properties of our hybrid closed-loop system \eqref{Cl_ctrl} as well as the optimality of the path and the continuity of the control. 
The next lemma guarantees that our system is well-posed.
\begin{lemma}\label{lem2}
   The hybrid closed-loop system \eqref{Cl_ctrl} with the data $\mathcal{H}$, satisfies the basic conditions in \cite[Assumption 6.5]{Sanfelice}.
\end{lemma}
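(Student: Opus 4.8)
The plan is to verify, in turn, the three hybrid basic conditions of \cite[Assumption 6.5]{Sanfelice} for the data $\mathcal{H}=(\mathcal{F},\mathrm{F},\mathcal{J},\mathrm{J})$: (i) $\mathcal{F}$ and $\mathcal{J}$ are closed; (ii) $\mathrm{F}$ is outer semicontinuous, locally bounded and convex-valued relative to $\mathcal{F}$, with $\mathcal{F}\subseteq\mathrm{dom}\,\mathrm{F}$; and (iii) $\mathrm{J}$ is outer semicontinuous and locally bounded relative to $\mathcal{J}$, with $\mathcal{J}\subseteq\mathrm{dom}\,\mathrm{J}$. For (i), I would argue that each constituent set is closed: $\mathcal{F}_0=\overline{\mathcal{V}}(x_d)$ and the jump sets $\mathcal{J}_{\pm1}$ are closures by construction; the shadow regions $\mathcal{S}(x_d)$ and $\mathcal{S}(x_d^{\pm1})$ are intersections of the closed cones $\mathcal{C}^{\leq}$ with the closed sets $\{q:(c-q)^\top(x_d^{(\cdot)}-q)\ge0\}$ inside the closed free space $\mathcal{X}$; and, since $(\mathcal{C}^{<})^c=\mathcal{C}^{\geq}$, each $\mathcal{F}_{\pm1}=\mathcal{S}(x_d^{\pm1})\cap\mathcal{C}^{\geq}_{\mathcal{X}}(c,v_{\pm1},\varphi_{\pm1})$ is closed. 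Because $\{-1,0,1\}$ is discrete, $\mathcal{F}$ and $\mathcal{J}$ are finite unions of products of closed sets, hence closed.

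For (ii), since $\mathrm{F}(x,m)=(u(x,m),0)$ is single-valued, convexity is immediate and it suffices to establish continuity of $u(\cdot,m)$ on each $\mathcal{F}_m$, from which outer semicontinuity and local boundedness follow. In the straight mode $u(x,0)=\gamma(x_d-x)$ is continuous. In the projection modes I would check that no denominator in $u(x,m)=\mu(x,m)u_m(x)$ vanishes on $\mathcal{F}_m$ at a finite point: one has $\|c-x\|\ge r>0$ on $\mathcal{X}$ and $\sin\theta(x)=r/\|x-c\|\in(0,1]$, so $\tau_m$ and $\beta_m$ are continuous wherever $u_d^m(x)\neq0$, i.e. everywhere except the virtual destination $x_d^m$.

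The crucial observation resolving the apparent singularity of $\mu$ at $x_d^m$ is that the half-space constraint $(c-q)^\top(x_d^m-q)\ge0$ in the definition of $\mathcal{S}(x_d^m)$ forces every $q\neq x_d^m$ in $\mathcal{S}(x_d^m)$ to satisfy $\|q-x_d^m\|\ge \|v_m\|\cos\theta(x_d^m)=\sqrt{\|x_d^m-c\|^2-r^2}>0$; writing $q=x_d^m+sw$ with $w\in\mathbb{S}^{n-1}$ in the cone, the constraint reduces to $s\ge v_m^\top w$. Hence $x_d^m$ is an \emph{isolated} point of $\mathcal{S}(x_d^m)$, and therefore of $\mathcal{F}_m$ (it is not removed by $\mathcal{C}^{<}(c,v_m,\varphi_m)$). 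Continuity, outer semicontinuity and local boundedness of $\mathrm{F}$ at $(x_d^m,m)$ then hold trivially once $u(x_d^m,m)$ is assigned any finite value. For (iii), $\mathrm{J}(x,m)=(x,M(x,m))$ with $M$ valued in the finite set $\{-1,0,1\}$, so local boundedness is automatic and $x^{+}=x$; outer semicontinuity reduces to that of $M$, which equals the constant $0$ on $\mathcal{J}_{\pm1}$ and the set-valued map $B$ on $\mathcal{J}_0=\mathcal{S}(x_d)$. Since the three regions defining $B$ are built from the \emph{closed} exterior cones $\mathcal{C}^{\geq}(c,v_{\pm1},\varphi_{\pm1})$ with $B$ deliberately set to $\{-1,1\}$ on their overlap, the graph of $B$ is closed, giving outer semicontinuity; moreover, Lemma~\ref{lem1} together with the choice \eqref{phi} yields $\mathcal{C}^{<}(c,v_1,\varphi_1)\cap\mathcal{C}^{<}(c,v_{-1},\varphi_{-1})=\varnothing$, so the two exterior cones cover $\mathbb{R}^n$ and thus $\mathcal{J}_0\subseteq\mathrm{dom}\,B$.

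The main obstacle I anticipate is item (ii): rigorously excluding the singularity of $\mu$ at the virtual destination. Everything hinges on the isolation of $x_d^m$ in $\mathcal{F}_m$, since otherwise the factor $e/\|x-x_d^m\|$ together with the direction-dependent limit of $\beta_m(x)$ would break continuity of the flow map and the basic conditions would fail. The closedness verification and the outer semicontinuity of the jump map are comparatively routine once the set-valued overlap in $B$ and the covering consequence of Lemma~\ref{lem1} are exploited.
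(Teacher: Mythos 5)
Your proof is correct and follows the same overall route as the paper's: direct verification of the three basic conditions, with the identical use of Lemma \ref{lem1} to conclude that the exterior cones $\mathcal{C}_{\mathbb{R}^n}^{\geq}(c,v_{1},\varphi_{1})$ and $\mathcal{C}_{\mathbb{R}^n}^{\geq}(c,v_{-1},\varphi_{-1})$ cover $\mathbb{R}^n$, so that $B$ (hence $\mathrm{J}$) is nonempty-valued, followed by the closed-graph argument (via \cite[Lemma 5.10]{Sanfelice}) for outer semicontinuity and local boundedness of the jump map. The genuine difference is in the flow-map condition: the paper disposes of it with the single assertion that ``$\mathrm{F}$ is continuous on $\mathcal{F}$,'' whereas you identify the real subtlety hidden in that assertion --- the factor $e/\|x-x_d^m\|$ in $\mu(x,m)$ and the direction-dependent limit of $\beta_m(x)$ make $u(\cdot,m)$ undefined at the virtual destination $x_d^m$, which does belong to $\mathcal{F}_m$ --- and you resolve it by proving $x_d^m$ is an isolated point of $\mathcal{S}(x_d^m)$: writing $q=x_d^m+sw$, the cone condition gives $v_m^\top w\geq\|v_m\|\cos\theta(x_d^m)$, the second defining condition gives $s\geq v_m^\top w$, hence $s\geq\sqrt{\|x_d^m-c\|^2-r^2}>0$. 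This computation is correct and is a genuine strengthening of the paper's argument, since without it relative continuity of $\mathrm{F}$ at $(x_d^m,\pm1)$ is not obvious; your explicit closedness check of $\mathcal{F}_{\pm1}=\mathcal{S}(x_d^{\pm1})\cap\mathcal{C}^{\geq}_{\mathcal{X}}(c,v_{\pm1},\varphi_{\pm1})$ likewise fills in what the paper only asserts. Two cosmetic points: the constraint $(c-q)^\top(x_d^m-q)\geq 0$ is not a half-space (it is the exterior of the ball with diameter $[c,x_d^m]$; your computation treats it correctly, only the name is off), and your covering statement fixes what is evidently a typo in the paper's proof, which writes $\mathcal{C}_{\mathbb{R}^n}^{\geq}(c,v_1,\varphi_1)\cap\mathcal{C}_{\mathbb{R}^n}^{\geq}(c,v_{-1},\varphi_{-1})=\mathbb{R}^n$ where the union is meant.
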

\begin{proof}
See Appendix \ref{appendix:Lem2}.
\end{proof}
Now, let us define the augmented state space and the desired equilibrium set as follows: 
\begin{align}\label{Hyb_freespace_att}
    \mathcal{K}:=\mathcal{X}\times\{-1,0,1\},\;
    \mathcal{A}:=\{x_d\}\times\{0\}.
\end{align}
For the hybrid closed-loop system \eqref{Cl_ctrl}, the augmented state space $\mathcal{K}$ is forward invariant as formally stated in the following lemma.
\begin{lemma}\label{lem3}
    The augmented state space $\mathcal{K}$, defined in \eqref{Hyb_freespace_att}, is forward invariant for the hybrid closed-loop system \eqref{Cl_ctrl}. 
\end{lemma}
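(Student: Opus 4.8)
The plan is to establish forward invariance of $\mathcal{K}=\mathcal{X}\times\{-1,0,1\}$ by separating the contributions of jumps and flows. The jump part is immediate: by \eqref{CL_hyb2} a jump keeps $x^+=x\in\mathcal{X}$, and by \eqref{jump_map}--\eqref{jump_map_B} the map $M$ takes values only in $\{-1,0,1\}$, so no jump can leave $\mathcal{K}$; the discrete component is also frozen during flows ($\dot m=0$) and remains in $\{-1,0,1\}$. The whole problem therefore reduces to showing that, along flows, $x$ cannot cross $\partial\mathcal{O}$ into $\mathring{\mathcal{O}}$. I would do this with a subtangentiality (Nagumo) argument: the Bouligand tangent cone to the closed set $\mathcal{X}=\mathbb{R}^n\setminus\mathring{\mathcal{O}}$ at a point $x\in\partial\mathcal{O}$ is $T_{\mathcal{X}}(x)=\{v\in\mathbb{R}^n\mid v^\top(c-x)\le 0\}$, so it suffices to prove $u(x,m)^\top(c-x)\le 0$ for every $x\in\partial\mathcal{O}$ and every admissible mode, i.e.\ every $m$ with $x\in\mathcal{F}_m$. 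Equivalently, with the barrier $V(x)=r^2-\|x-c\|^2$ this reads $\dot V=2\,u(x,m)^\top(c-x)\le 0$ on $\{V=0\}$, which prevents $V$ from turning positive. Continuity of $u$ together with the basic conditions verified in Lemma \ref{lem2} then make this first-order condition yield invariance.

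For the projection modes $m\in\{-1,1\}$ I would sign the radial component directly. Since $u(x,m)=\mu(x,m)u_m(x)$ with $\mu(x,m)>0$, it is enough to evaluate $u_m(x)^\top(c-x)$. Substituting $u_m(x)=u_d^m(x)-\tau_m(x)\frac{c-x}{\|c-x\|}$ and using $u_d^m(x)^\top(c-x)=\|u_d^m(x)\|\,\|c-x\|\cos\beta_m(x)$ with $\tau_m(x)=\|u_d^m(x)\|\sin(\theta(x)-\beta_m(x))\sin^{-1}(\theta(x))$ gives
\begin{equation*}
u_m(x)^\top(c-x)=\|u_d^m(x)\|\,\|c-x\|\Big(\cos\beta_m(x)-\tfrac{\sin(\theta(x)-\beta_m(x))}{\sin\theta(x)}\Big),
\end{equation*}
and expanding $\sin(\theta(x)-\beta_m(x))$ collapses the bracket to $\cot\theta(x)\sin\beta_m(x)$. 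The key observation is that $\theta(x)=\arcsin(r/\|x-c\|)$ equals $\tfrac{\pi}{2}$ precisely on $\partial\mathcal{O}$, where $\|x-c\|=r$, so $\cot\theta(x)=0$ there and hence $u_m(x)^\top(c-x)=0$ at \emph{every} boundary point, independently of the mode. Thus in the projection modes the velocity is exactly tangent to $\partial\mathcal{O}$ and $u(x,m)^\top(c-x)=0\le 0$.

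For the straight mode $m=0$ the velocity is $u(x,0)=\gamma(x_d-x)$, and I would parametrize boundary points as $x=c+rn$ with $n\in\mathbb{S}^{n-1}$, yielding $(x_d-x)^\top(c-x)=r^2-r\,(x_d-c)^\top n$, which is $\le 0$ iff $(x_d-c)^\top n\ge r$. It then remains to verify that this inequality holds wherever mode $0$ is allowed to flow, namely on $\mathcal{F}_0\cap\partial\mathcal{O}=\overline{\mathcal{V}}(x_d)\cap\partial\mathcal{O}$. Here I would use the geometry of the shadow set $\mathcal{S}(x_d)$ from Section \ref{section:sets}: the enclosing cone $\mathcal{C}^{\leq}_{\mathcal{X}}(x_d,c-x_d,\theta(x_d))$ contains all of $\partial\mathcal{O}$, so on the boundary membership in $\mathcal{S}(x_d)$ is governed solely by $(c-x)^\top(x_d-x)\ge 0$, i.e.\ $(x_d-c)^\top n\le r$. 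Consequently the visible cap $\overline{\mathcal{V}}(x_d)\cap\partial\mathcal{O}$ is exactly $\{(x_d-c)^\top n\ge r\}$, on which $(x_d-x)^\top(c-x)\le 0$, with equality precisely on the tangency circle $(x_d-c)^\top n=r$ separating the two regions. Hence $u(x,0)^\top(c-x)\le 0$ on the admissible portion of the boundary.

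Combining the three cases, $u(x,m)\in T_{\mathcal{X}}(x)$ for every $x\in\partial\mathcal{O}$ and every mode $m$ with $x\in\mathcal{F}_m$; by the tangent-cone condition the flows render $\mathcal{X}$ forward invariant, and since jumps preserve both components, $\mathcal{K}$ is forward invariant. I expect the main obstacle to be the book-keeping in the straight mode, namely identifying $\overline{\mathcal{V}}(x_d)\cap\partial\mathcal{O}$ with the visible cap and handling the non-strict boundary case, since that is where the inequality is tight; the projection modes are essentially automatic once one notices that $\theta(x)=\tfrac{\pi}{2}$ forces the radial velocity to vanish on $\partial\mathcal{O}$, which is the structural reason the design is safe.
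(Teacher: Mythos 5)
Your geometric computations coincide with the paper's: in the projection modes the identity $\cos\beta_m(x)-\sin(\theta(x)-\beta_m(x))/\sin\theta(x)=\cot\theta(x)\sin\beta_m(x)$ together with $\theta(x)=\pi/2$ on $\partial\mathcal{O}$ gives a velocity exactly tangent to the obstacle, and in the straight mode the sign of $(x_d-x)^\top(c-x)$ on the visible portion of $\partial\mathcal{O}$ is precisely what the paper checks; the jump bookkeeping ($x^+=x$, $M$ valued in $\{-1,0,1\}$) is also the same. The gap is in what these facts are assembled into. The notion of forward invariance the paper uses (\cite[Definition~3.3]{Forward_invariance_sanfelice}) demands, for every initial condition in $\mathcal{K}$, existence of a nontrivial solution and \emph{completeness} of every maximal solution with range in $\mathcal{K}$. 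In the hybrid solution concept a flowing solution cannot ``cross'' out of $\mathcal{F}$ at all---that is built into the definition of a solution---so the failure mode to exclude is not penetration of $\mathring{\mathcal{O}}$ but premature termination: either the solution reaches a point of $\mathcal{F}\setminus\mathcal{J}$ from which no flow direction remains in $\mathcal{F}$ (a viability failure, measured by the tangent cone of the \emph{flow set} $\mathcal{F}$, not of $\mathcal{X}$), or it escapes in finite time. Your Nagumo condition on $\partial\mathcal{O}$ only becomes the required viability condition after two reductions you never state: (i) every portion of $\partial\mathcal{F}_m$ other than the obstacle boundary lies inside $\mathcal{J}_m$, which is why it suffices to check $\partial\mathcal{F}_0\setminus\mathcal{J}_0=\partial\mathcal{O}\setminus\mathcal{J}_0$ and $\partial\mathcal{F}_m\setminus\mathcal{J}_m\subset\partial\mathcal{O}\setminus\mathcal{J}_m$ (at all other boundary points the solution continues by jumping, not flowing); and (ii) at a point $x\in\partial\mathcal{O}\cap(\mathcal{F}_m\setminus\mathcal{J}_m)$ the set $\mathcal{F}_m$ coincides with $\mathcal{X}$ in a neighborhood of $x$ (because $\mathcal{J}_m$ is closed), so that $\mathrm{T}_{\mathcal{F}_m}(x)=\mathrm{T}_{\mathcal{X}}(x)$ and your half-space computation is the right one.

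The second, more clear-cut omission is completeness: nothing in your proposal rules out finite escape time along flows, which the cited definition of forward invariance requires. The paper closes this with the Lyapunov function $V(x)=\frac{1}{2}\|x-x_d^m\|^2$, showing $\dot V\le 0$ in every mode, so $\|x-x_d^m\|$ cannot grow unbounded in finite time; it then invokes \cite[Proposition~6.10]{Sanfelice}: the viability condition on $\mathcal{F}\setminus\mathcal{J}$ yields existence of nontrivial solutions, absence of finite escape time together with $\mathrm{J}(\mathcal{J})\subset\mathcal{K}$ (and $\mathcal{K}=\mathcal{F}\cup\mathcal{J}$, which you also never record) yields completeness of all maximal solutions, hence forward invariance. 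If you make reductions (i)--(ii) explicit and add a no-escape-time argument (your own estimate that $u$ has linear growth would also suffice), your proof becomes correct and is then essentially the paper's argument expressed in Nagumo language.
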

\begin{proof}
See Appendix \ref{appendix:Lem3}.
\end{proof}
Finally, we can present our main result in the next theorem.
\begin{theorem}\label{the1}
    Consider the augmented state space $\mathcal{K}$, described in \eqref{Hyb_freespace_att}, and the hybrid closed-loop system \eqref{Cl_ctrl}. Then, the following statements hold:
    \begin{itemize}
        \item [i)] The augmented state space $\mathcal{K}$ is forward invariant.
        \item [ii)] The set $\mathcal{A}$ is globally asymptotically stable.
    \end{itemize}
\end{theorem}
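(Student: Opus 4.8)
The plan is to derive statement (i) directly from Lemma~\ref{lem3}, and to spend the effort on (ii), which I would split into Lyapunov stability of $\mathcal{A}$ and global attractivity, working inside the hybrid-systems framework of \cite{Sanfelice} (justified by the well-posedness Lemma~\ref{lem2}). As a preliminary I would verify that every maximal solution is complete: the flow map $u(\cdot,m)$ is continuous with at most linear growth on $\mathcal{X}$, which rules out finite escape times, and from every point of $\mathcal{F}\cup\mathcal{J}$ either flowing or a jump (which fixes $x$) is admissible, so no maximal solution stops prematurely.

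For stability I would use that $x_d\in\mathring{\mathcal{X}}$ trivially has a clear line-of-sight to itself, so there is an open neighborhood $\mathcal{N}\ni x_d$ with $\mathcal{N}\subset\mathring{\mathcal{V}}(x_d)$; then $\mathcal{N}\times\{0\}$ meets no jump set, the flow reduces to $\dot x=\gamma(x_d-x)$, and $V_0(x)=\tfrac12\|x-x_d\|^2$ yields $\dot V_0=-\gamma\|x-x_d\|^2\le 0$. Solutions starting near $\mathcal{A}$ therefore neither jump nor leave $\mathcal{N}$, which gives stability; this is the routine step.

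Global attractivity is the core. The mode-independent candidate is the free-space geodesic distance $d(x)$ from $x$ to $x_d$, which is continuous and radially unbounded on $\mathcal{X}$, so its sublevel sets are compact and all solutions are bounded. Along the straight mode the motion is straight toward $x_d$ and $d$ strictly decreases; along a projection mode the control \eqref{hyb_ctrl} is, up to the positive factor $\mu>1$, the shortest-path law used in the continuous design \cite{Ishak2023}, and I would invoke that shortest-path property to conclude that $d$ again strictly decreases off the equilibria, while jumps leave $d$ unchanged since $x^+=x$. The hybrid invariance principle then drives every solution to the largest weakly invariant set on which $u(x,m)=0$, namely $\mathcal{A}$ together with the projection-mode equilibria $\mathcal{L}_e(x_d^{m})$, $m\in\{-1,1\}$.

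The decisive and hardest step is to rule out $\mathcal{L}_e(x_d^{m})$. These points lie on the ray from $x_d^{m}$ through $c$, hence inside the removed cone $\mathcal{C}^{<}_{\mathcal{X}}(c,v_m,\varphi_m)$, so they are not in the flow set $\mathcal{F}_m$ and cannot belong to a weakly invariant set reached by flowing; a solution approaching them is forced into $\mathcal{J}_m$ and jumps to $m=0$. I would then combine the placement of the virtual destinations on the hat of the real cone with the separation condition \eqref{phi}, together with Lemma~\ref{lem1} which makes the two exclusion cones meet only at $c$, to show that every such jump leaves the robot either in $\mathcal{V}(x_d)$, whence the straight mode carries it to $x_d$ without further jumps, or at a configuration from which the next maneuver strictly advances it around the obstacle. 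The real obstacle in the proof is exactly this geometric bookkeeping: showing the mode sequence cannot cycle and that switches do not accumulate (no Zeno), so that after finitely many jumps the solution enters $\mathcal{V}(x_d)\times\{0\}$ and converges. Establishing a uniform dwell time on the projection modes --- equivalently, a definite amount of decrease of $d$ accrued between consecutive visits to the straight mode --- is what excludes the undesired equilibria and promotes attractivity to the global claim; together with stability this gives GAS of $\mathcal{A}$.
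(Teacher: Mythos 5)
Your item (i) and your stability step coincide with the paper's proof (Lemma~\ref{lem3}, plus a ball around $x_d$ contained in $\mathcal{F}_0$ on which the flow is $\dot x=-\gamma(x-x_d)$ and no jump is possible). The genuine gap is in your attractivity core: the claim that the geodesic distance $d(\cdot)$ to the \emph{real} target $x_d$ strictly decreases along projection-mode flows is false. The control \eqref{hyb_ctrl} with $m=\pm1$ follows the shortest path to the \emph{virtual} destination $x_d^m$, not to $x_d$, and the jump map \eqref{jump_map_B} permits either mode to be selected anywhere outside the two exclusion cones. Take $x$ just to the left of the ray $\mathcal{L}_e(x_d)$ (hence outside both cones, by \eqref{phi}), close to the obstacle so that $\theta(x)>\pi/4$, and let $B(x)$ select $m=-1$: the shortest path to $x_d^{-1}$ then wraps around the side of the obstacle \emph{opposite} to the side realizing the geodesic to $x_d$, the velocity points toward the "wrong" tangent point, and one gets $\dot d=-\|u\|\cos\bigl(2\theta(x)\bigr)>0$. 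So $d$ strictly increases on a nontrivial time interval; there is no common nonincreasing function of the type you propose, and the invariance-principle step collapses at its first hypothesis. (Two further soft spots: the invariance principle localizes solutions in the set where $\dot d=0$, which is larger than $\{u(x,m)=0\}$; and the decrease of the distance to $x_d$ holds only under the optimality-oriented initialization of Remark~\ref{rem1}, which is not part of the hybrid closed loop \eqref{Cl_ctrl} whose GAS Theorem~\ref{the1} asserts.)

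The paper's proof sidesteps all of this with a \emph{mode-dependent} Lyapunov function, $V(x)=\tfrac12\|x-x_d^m\|^2$, the Euclidean distance to the active virtual destination: on $\mathcal{F}_m$, $m\in\{-1,1\}$, one computes $\dot V=-\gamma\mu(x,m)\|x-x_d^m\|^2\frac{\sin(\beta_m(x))}{\sin(\theta(x))}\cos\bigl(\theta(x)-\beta_m(x)\bigr)$, which is strictly negative because the zero set $\{\beta_m(x)=0\}$ --- precisely the undesired equilibria ray $\mathcal{L}_e(x_d^m)$ you were trying to rule out --- is removed from $\mathcal{F}_m$ by the cone $\mathcal{C}^{<}_{\mathcal{X}}(c,v_m,\varphi_m)$ in \eqref{set2}--\eqref{set3}. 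Since $x_d^m\notin\mathcal{F}_m$, the solution must exit $\mathcal{F}_m$ in finite time, jump to the straight mode, and then contract to $x_d$; no invariance principle, no dwell-time argument, and no separate exclusion of the projection-mode equilibria are needed. Note finally that the "geometric bookkeeping" you defer (no mode cycling, no Zeno accumulation) is exactly the part your proposal never establishes, so even if the monotonicity claim were repaired your argument would remain incomplete --- and repairing it essentially forces you back to the paper's per-mode construction.
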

\begin{proof}
See Appendix \ref{appendix:the1}.
\end{proof}
Theorem \ref{the1} states that for any initial condition inside the free space $\mathcal{X}$, the point-mass robot with position $x$ reaches its destination $x_d$ safely. One important characteristic of our proposed hybrid controller is that the resultant motion, during the flow, is two-dimensional for all $n\geq2$ which is stated in the following lemma.
\begin{lemma}\label{lem4}
Consider the augmented state space $\mathcal{K}$ defined in \eqref{Hyb_freespace_att}. The trajectory $x(t,j)$ of the closed-loop system \eqref{Cl_ctrl}
belongs to the two-dimensional plane $\mathcal{PL}(x_d^m,c,x_0^m)$ passing by the points $x_d^m$, $c$, and $x_0^m=x(t_0^m,j_0^m)\in\mathcal{X}\setminus\mathcal{L}(x_d,c)$ for $m\in\{-1,0,1\}$, $0<t_0^m\leq t\leq t_f^m$, and $0<j_0^m\leq j\leq j_f^m$ where $(t_0^m, j_0^m)$ and $(t_f^m,j_f^m)$ are, respectively, the activation and deactivation hybrid time of mode $m$. If $x_0^m\in\mathcal{X}\cap\mathcal{L}(x_d,c)$, $x(t,j)\in\mathcal{PL}(x_d^m,c,y)$ for all $(t,j)\in[t_0^m,t_f^m]\times[j_0^m,j_f^m]$ and $m\in\{-1,0,1\}$, where $y\in\mathcal{X}\setminus\mathcal{L}(x_d,c)$.
\end{lemma}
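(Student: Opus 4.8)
The plan is to establish the invariance of the plane $\mathcal{PL}(x_d^m,c,x_0^m)$ during each flow interval of mode $m$, handling the two cases (generic initial point versus a point on the axis $\mathcal{L}(x_d,c)$) separately, and then arguing that the motion remains planar across the flow phases associated with each active mode. First I would fix an active mode $m\in\{-1,0,1\}$ over a flow interval $[t_0^m,t_f^m]\times\{j\}$ and show that the velocity vector $u(x,m)$ always lies in the plane spanned by the three defining points, so that a trajectory starting in that plane never leaves it. The key observation is that all the vectors appearing in the control law \eqref{hyb_ctrl} — namely $u_d^m(x)=\gamma(x_d^m-x)$ and the radial correction $\tfrac{c-x}{\|c-x\|}$ — are affine combinations of the fixed points $x_d^m$, $c$, and the current position $x$. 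Since $\mu(x,m)$, $\tau_m(x)$, $\beta_m(x)$, and $\theta(x)$ are scalar functions, the control $u(x,m)$ is a scalar-weighted linear combination of $(x_d^m-x)$ and $(c-x)$; hence $u(x,m)\in\operatorname{span}\{x_d^m-x,\,c-x\}$.

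Next I would invoke the standard invariance argument: define the affine plane $\mathcal{PL}(x_d^m,c,x_0^m)$ and let $N$ be its (at most $(n-2)$-dimensional) orthogonal complement of directions. For any fixed vector $w\in N$ orthogonal to both $x_d^m-c$ and $x_0^m-c$, consider the scalar $s(t):=w^\top(x(t)-c)$. At the initial time $s(t_0^m)=0$ because $x_0^m$ lies in the plane, and $\dot s=w^\top\dot x=w^\top u(x,m)=0$ for all $t$ since $u(x,m)$ is a combination of $(x_d^m-x)=(x_d^m-c)-(x-c)$ and $(c-x)=-(x-c)$, each of whose projection onto $w$ reduces to a multiple of $w^\top(x-c)=s(t)$; this yields $\dot s = a(t)\,s(t)$ for a scalar $a(t)$ built from the control gains, and by uniqueness of the scalar ODE with $s(t_0^m)=0$ we conclude $s(t)\equiv0$. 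Thus $x(t,j)$ stays in $\mathcal{PL}(x_d^m,c,x_0^m)$ throughout the flow of that mode. The degenerate case $x_0^m\in\mathcal{X}\cap\mathcal{L}(x_d,c)$ is where the three points $x_d^m$, $c$, $x_0^m$ become collinear and no longer determine a unique plane; here I would instead fix any auxiliary point $y\in\mathcal{X}\setminus\mathcal{L}(x_d,c)$, use the plane $\mathcal{PL}(x_d^m,c,y)$, and note that the axis $\mathcal{L}(x_d,c)$ together with the control direction still lie within this chosen plane, so the same invariance argument applies.

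The main obstacle I anticipate is the transition across jumps: since $x^+=x$ at each jump while only the mode index $m$ changes, the position is continuous, but the relevant reference plane changes from $\mathcal{PL}(x_d^m,c,x_0^m)$ to $\mathcal{PL}(x_d^{m'},c,x_0^{m'})$ for the new mode $m'$. I would need to verify that these planes are consistent, i.e. that the virtual destinations $x_d^{-1},x_d^1$ and the target $x_d$ are constructed (via the reflector $\rho$ about $\tfrac{c-x_d}{\|c-x_d\|}$, as in the definition preceding \eqref{hyb_ctrl}) to all lie in the common plane through $x_d$, $c$, and the post-jump position; this follows because the reflection used to place the virtual destinations is taken about the axis $\mathcal{L}(x_d,c)$, keeping $x_d^m$ coplanar with $x_d$, $c$, and the point at which the jump occurs. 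Consequently each per-mode planar confinement stitches together consistently across the jump times $(t_0^m,j_0^m)$ and $(t_f^m,j_f^m)$, and the lemma's interval-wise statement for each $m$ follows. I would close by remarking that the scalar-ODE uniqueness step requires $a(t)$ to remain finite, which holds on the flow set since $\|c-x\|\geq r>0$ and $\theta(x)\in(0,\tfrac{\pi}{2}]$ keep all denominators bounded away from zero.
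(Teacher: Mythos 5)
Your first two paragraphs prove exactly what Lemma \ref{lem4} asserts, and they follow essentially the paper's own route: the paper's proof simply observes that in mode $0$ the velocity $u(x,0)=\gamma(x_d-x)$ produces a line segment, and that in modes $\pm1$ the velocity is a scalar combination of the two vectors $(x_d^m-x)$ and $(c-x)$, so the motion during each mode is ``bi-dimensional.'' Your scalar-ODE invariance argument ($s(t)=w^\top(x(t)-c)$, $\dot s=a(t)\,s$, $s(t_0^m)=0$, hence $s\equiv0$ for every direction $w$ normal to the plane) is a rigorous rendering of that one-line assertion, and your use of an auxiliary point $y$ in the degenerate case matches the paper's.

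Your third paragraph, however, is both unnecessary and incorrect, and you should delete it. It is unnecessary because Lemma \ref{lem4} is a per-mode statement: planarity is claimed only over each activation interval $[t_0^m,t_f^m]\times[j_0^m,j_f^m]$, with the plane re-anchored at the new mode's activation point $x_0^m$; no consistency of planes across jumps is asserted (the paper notes immediately after the lemma that the motion can lie in different planes in different modes). It is incorrect because for $n\geq3$ the planes do \emph{not} automatically stitch together: $x_d^{\pm1}$ are fixed points on the hat of the cone, chosen independently of the trajectory, so the plane through $x_d$, $c$, and the post-jump position generally does not contain them; the reflection $x_d^{-m}=x_d-\rho(\tfrac{c-x_d}{\|c-x_d\|})(x_d^m-x_d)$ only relates the two virtual destinations to each other, not to the robot's position. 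Obtaining a single plane of motion is precisely the content of Lemma \ref{lem5}, which needs the extra hypothesis $x_d^m\in(\mathcal{C}^{=}(x_d,c-x_d,\theta(x_d))\setminus\mathcal{E}(x_d))\cap\mathcal{PL}(x_d,c,x(0,0))$. Two smaller points: in the degenerate case with $m=\pm1$ the points $x_d^m$, $c$, $x_0^m$ are in fact \emph{not} collinear (only $x_d$, $c$, $x_0^m$ are), so the plane of motion is the specific plane $\mathcal{PL}(x_d^m,c,x_0^m)$, which contains the axis $\mathcal{L}(x_d,c)$, and $y$ must be chosen inside it --- ``any'' $y$ off the axis works only for $m=0$; and your boundedness remark should also account for the factor $\|x-x_d^m\|^{-1}$ in $\mu(x,m)$, which is harmless because $\mathcal{S}(x_d^m)$ meets a small ball around its vertex $x_d^m$ only at the vertex itself, so the flow in mode $m$ stays away from $x_d^m$.
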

\begin{proof}
See Appendix \ref{appendix:Lem4}.
\end{proof}
Lemma \ref{lem4} states that the robot motion during mode $m$ is planar which means that the motion can be in different planes with different modes. The next lemma provides the condition to have the full trajectory in a single plane.
\begin{lemma}\label{lem5}
    Consider the augmented state space $\mathcal{K}$ defined in \eqref{Hyb_freespace_att} and the hybrid closed-loop system \eqref{Cl_ctrl}. For every initial condition $(x(0,0),m(0,0))\in\mathcal{X}\setminus\mathcal{L}(x_d,c)\times\{-1,1\}$, if the virtual destinations are selected such that $x_d^m\in(\mathcal{C}^{=}(x_d,c-x_d,\theta(x_d))\setminus\mathcal{E}(x_d))\cap\mathcal{PL}(x_d,c,x(0,0))$, then $x(t,j)\in\mathcal{PL}(x_d,c,x(0,0))$ for all $t\geq0$ and $j\in\mathbb{N}$. If $(x(0,0),m(0,0))\in\mathcal{X}\cap\mathcal{L}(x_d,c)\times\{-1,1\}$, and the virtual destinations are selected as $x_d^m\in(\mathcal{C}^{=}(x_d,c-x_d,\theta(x_d))\setminus\mathcal{E}(x_d))\cap\mathcal{PL}(x_d,c,y)$, then $x(t,j)\in\mathcal{PL}(x_d,c,y)$ for all $t\geq0\,j\in\mathbb{N}$ where $y\in\mathcal{X}\setminus\mathcal{L}(x_d,c)$.
\end{lemma}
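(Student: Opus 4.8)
The plan is to bootstrap the full-trajectory planarity from the per-mode planarity already granted by Lemma \ref{lem4}, using the fact that jumps leave the position unchanged (\(x^+=x\) in \eqref{CL_hyb2}) to carry the plane across mode transitions. I fix the candidate plane \(P:=\mathcal{PL}(x_d,c,x(0,0))\) in the non-degenerate case \(x(0,0)\notin\mathcal{L}(x_d,c)\), and \(P:=\mathcal{PL}(x_d,c,y)\) (with the same off-axis auxiliary \(y\) appearing in Lemma \ref{lem4}) in the degenerate case \(x(0,0)\in\mathcal{L}(x_d,c)\). In either case \(P\) is a two-dimensional plane containing both \(x_d\) and \(c\), hence the whole axis line \(\mathcal{L}(x_d,c)\); the goal is to show \(P\) is invariant under the entire hybrid evolution.

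First I would note that the placement hypothesis puts every virtual destination in \(P\). By assumption \(x_d^m\in P\) for the active \(m\in\{-1,1\}\), and the reflection relation \(x_d^{-m}=x_d-\rho(\tfrac{c-x_d}{\|c-x_d\|})(x_d^m-x_d)\) keeps the companion destination in \(P\) as well: the linear part of \(P\) is spanned by \(c-x_d\) and one orthogonal direction, and the reflector \(\rho(\tfrac{c-x_d}{\|c-x_d\|})\) maps this span into itself. Hence \(\{x_d^{-1},x_d^{0},x_d^{1}\}\subset P\) with \(x_d^0=x_d\).

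The core is an induction over the hybrid time domain, which alternates flow intervals of \eqref{CL_hyb1} with jumps of \eqref{CL_hyb2}. For a flow interval of an arbitrary mode \(m\) activated at \((t_0^m,j_0^m)\) from position \(x_0^m\), Lemma \ref{lem4} confines the trajectory to a two-dimensional plane \(Q_m\) passing through \(c\) and \(x_d^m\), and through \(x_0^m\) when \(x_0^m\notin\mathcal{L}(x_d,c)\) or through the auxiliary \(y\) otherwise. I claim \(Q_m=P\): the plane \(Q_m\) shares with \(P\) the points \(c\) and \(x_d^m\) by the previous step, together with its third defining point, namely \(x_0^m\) — which lies in \(P\) by the induction hypothesis, since the preceding flow interval stayed in \(P\) and the intervening jump fixes \(x\) — or else the chosen \(y\in P\). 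A two-dimensional plane sharing three non-collinear points with \(P\) must equal \(P\), so \(Q_m=P\). The base case is immediate: in the non-degenerate case the first activation position \(x(0,0)\in P\) is off the axis, giving a non-collinear triple; in the degenerate case the first interval is handled by the \(y\)-clause of Lemma \ref{lem4} with \(y\in P\). Splicing the intervals with \(x^+=x\) then yields \(x(t,j)\in P\) throughout.

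The hard part is securing the identification \(Q_m=P\) in every configuration, including those where the defining triple \(\{x_d^m,c,x_0^m\}\) degenerates to collinearity. This can happen not only when \(x_0^m\) sits on the axis \(\mathcal{L}(x_d,c)\), which the \(y\)-clause of Lemma \ref{lem4} is built to absorb, but also when \(x_0^m\) sits on \(\mathcal{L}(x_d^m,c)\); in that event the projection-mode flow remains on that line, which is already contained in \(P\) because \(x_d^m,c\in P\). Thus in each degenerate sub-case the motion still lies in \(P\) — either as a genuine plane matching \(P\) or as a line inside \(P\) — and I expect the crux of the write-up to be verifying that these are the only degeneracies and that no flow/jump alternation ever produces a plane transverse to \(P\).
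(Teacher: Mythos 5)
Your proposal is correct and follows essentially the same route as the paper: an induction over the hybrid time domain in which Lemma \ref{lem4} confines each flow arc to a plane through $c$ and $x_d^m$, and the jump relation $x^+=x$ transfers the inductive hypothesis across mode switches, forcing each per-mode plane to coincide with $\mathcal{PL}(x_d,c,x(0,0))$ (or $\mathcal{PL}(x_d,c,y)$ in the degenerate case). Your additional checks --- that the companion destination $x_d^{-m}$ stays in the plane under the reflector $\rho\bigl(\tfrac{c-x_d}{\|c-x_d\|}\bigr)$, and that collinear triples $\{x_d^m,c,x_0^m\}$ cause no harm --- are refinements the paper leaves implicit, not a different argument.
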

\begin{proof}
See Appendix \ref{appendix:Lem5}.
\end{proof}
The result of Lemma \ref{lem5} requires selecting virtual destinations for every initial condition. We should also mention that the initial conditions $(x(0,0),0)$ starting from the {\it straight} mode are omitted since the hybrid system, in this case, will have no jumps and will remain in the same mode, meaning that the entire trajectory will be on a single plane regardless of the virtual destinations.
\begin{rem}\label{rem1}
    The closed-loop hybrid system \eqref{Cl_ctrl} can achieve the same optimality as in \eqref{18} under the following conditions:
    \begin{itemize}
        \item [i)]When the initial position $x(0,0)$ is in the hysteresis region, initialize the mode $m$ as follows:
    \begin{itemize}
    \item If $x(0,0)\in\mathcal{F}_k\cap\mathcal{F}_0$, $k\in\{-1,1\}$, $m=0$.
    \item If $x(0,0)\in\left(\bigcap\limits_{k\in\{-1,1\}}\mathcal{F}_k\setminus\mathcal{F}_0\right)\cap\mathcal{P}_{<}(c,w)$, $m=1$.
    \item If $x(0,0)\in\left(\bigcap\limits_{k\in\{-1,1\}}\mathcal{F}_k\setminus\mathcal{F}_0\right)\cap\mathcal{P}_{>}(c,w)$, $m=-1$.
    \item If $x(0,0)\in\left(\bigcap\limits_{k\in\{-1,1\}}\mathcal{F}_k\setminus\mathcal{F}_0\right)\cap\mathcal{P}_{=}(c,w)$, $m\in\{-1,1\}$.
    \end{itemize}
    where $w=x_d^{-1}-x_d^1$.
    \item [ii)]Thanks to the placement of the virtual destinations on the hat of the cone enclosing the obstacle and the scalar function $\mu(x,m)$, the control input enjoys continuity in 2D which will be shown in Lemma \ref{lem6}. If in addition, the strategy in item i) is implemented, the generated path is the shortest.
    \item [iii)]To enjoy both continuity and optimality in 3D, we select the virtual destinations as in Lemma \ref{lem5} for every initial condition, which brings us to the 2D case, and implement the scheme in item i). 
    \end{itemize}
\end{rem}
The implementation of the hybrid control \eqref{hyb_ctrl} following the instructions of Remark \ref{rem1} is summarized in Algorithm \ref{alg1}.
\begin{algorithm}
 \caption{Implementation of the hybrid control \eqref{hyb_ctrl}}\label{alg1}
 \begin{algorithmic}[1]
 \renewcommand{\algorithmicrequire}{\textbf{Initialization:}}
 \REQUIRE $x_d,\,c,\,r,\,e,\,x(0,0),\,m(0,0)$. 
 \STATE Select $x_d^1$ and $x_d^{-1}$ as in Lemma \ref{lem5}. 
 \STATE Construct the flow and jump sets \eqref{flow_jump_sets}.
\STATE Measure $x$. 
  \STATE Update $m$ using \eqref{jump_map}.
  \STATE Execute $u(x,m)$ using \eqref{hyb_ctrl}.
  \STATE Return to step 3.
 \end{algorithmic} 
 \end{algorithm}
 In the next lemma, we state the continuity of the control input \eqref{hyb_ctrl} in two-dimensional spaces.
 \begin{lemma}\label{lem6}
     Consider the augmented state space $\mathcal{K}$ of dimension $2\times1$ defined in \eqref{Hyb_freespace_att}. The hybrid control law \eqref{hyb_ctrl}, when the mode $m$ is initialized as in Remark \ref{rem1}, is continuous.
 \end{lemma}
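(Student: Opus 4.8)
The plan is to split the claim into two parts: continuity of $u(\cdot,m)$ on the flow set for each fixed mode $m$, and the matching identity $u(x,m)=u(x,m^{+})$ at every jump. Since jumps leave the position unchanged ($x^{+}=x$), these two facts together make the map $t\mapsto u(x(t,j),m(t,j))$ continuous along every solution. The within-mode continuity is direct: $u(x,0)=\gamma(x_d-x)$ is affine, and for $m\in\{-1,1\}$ the control $u(x,m)=\mu(x,m)u_m(x)$ is a product and composition of the maps $\|c-x\|$, $\|x-x_d^m\|$, $\theta(x)$, $\beta_m(x)$, $\tau_m(x)$, $\mu(x,m)$, all of which are continuous on $\mathcal{F}_m$ because there $\|c-x\|\geq r>0$, $\sin\theta(x)>0$ (as $\theta(x)\in(0,\frac{\pi}{2}]$), and $x\neq x_d^m$, so that $c-x$ and $u_d^m(x)=\gamma(x_d^m-x)$ are nonzero and $\beta_m=\angle(c-x,u_d^m(x))$ is well defined.

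The heart of the proof is the matching identity at the jump from a projection mode $m\in\{-1,1\}$ to the straight mode, which the robot takes upon leaving $\mathcal{F}_m$ through the exit set $\mathcal{E}(x_d^m)$. On $\mathcal{E}(x_d^m)$ the ray from $x_d^m$ through $x$ is a generator of the cone $\mathcal{C}^{=}(x_d^m,c-x_d^m,\theta(x_d^m))$, hence tangent to the obstacle, so seen from $x$ the direction to $x_d^m$ makes the angle $\theta(x)$ with the direction to $c$; that is, $\beta_m(x)=\theta(x)$. This forces $\tau_m(x)=0$ and $\mu(x,m)=1+e/\|x-x_d^m\|$, whence $u(x,m)=\gamma\bigl(1+e/\|x-x_d^m\|\bigr)(x_d^m-x)$. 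Now I would invoke the placement of the virtual destination: $x_d^m$ lies on $\mathcal{C}^{=}(x_d,c-x_d,\theta(x_d))$ at distance $e$ from $x_d$, so the generator of that cone through $x_d^m$ is exactly the tangent line common to $\mathcal{S}(x_d)$ and $\mathcal{S}(x_d^m)$. On this shared generator the points $x,x_d^m,x_d$ are collinear with $x_d=x_d^m+e(x_d^m-x)/\|x_d^m-x\|$, and therefore $\bigl(1+e/\|x-x_d^m\|\bigr)(x_d^m-x)=(x_d^m-x)+e(x_d^m-x)/\|x_d^m-x\|=x_d-x$, giving $u(x,m)=\gamma(x_d-x)=u(x,0)$. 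The scaling $\mu$ thus precisely absorbs the offset $e$ between the virtual and the true target, and the two controls coincide across the jump. The reciprocal jumps from the straight to a projection mode occur only for initial conditions lying in the shadow; by initializing the mode as in Remark~\ref{rem1} the solution never performs such a jump at an interior shadow point (where $\beta_m<\theta$ and the controls would disagree), so no discontinuity is created at start-up.

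The main obstacle is to guarantee that the projection-to-straight jump is always taken on the shared generator, where the collinearity — and hence the cancellation yielding $x_d-x$ — holds; on the remaining part of $\mathcal{E}(x_d^m)$ the vectors $x_d^m-x$ and $x_d-x$ are not parallel and $u(x,m)$ does not reduce to $\gamma(x_d-x)$. I would close this by combining the mode-selection map $B(x)$ with the shortest-path nature of $u_m$: $B$ routes the robot around the side of the obstacle on which the virtual destination of the chosen mode lies, so the trajectory grazes the obstacle on that side and exits $\mathcal{S}(x_d^m)$ precisely along the common generator. It then remains to check that the excised cone $\mathcal{C}^{<}(c,v_m,\varphi_m)$ — present only to repel solutions from the undesired equilibria $\mathcal{L}_e(x_d^m)$ — is not crossed during this flow, so that every jump encountered by a solution initialized as in Remark~\ref{rem1} lies on $\mathcal{E}(x_d^m)$. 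Together with the within-mode continuity, the matching identity on this set yields continuity of the control along every solution.
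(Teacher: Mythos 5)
Your proposal follows essentially the same route as the paper's proof: within-mode continuity, plus the matching identity at the projection-to-straight jump, obtained from $\beta_m(x)=\theta(x)$ on the exit set (so $\tau_m(x)=0$), the collinearity of $x$, $x_d^m$, $x_d$ along the shared generator, and the scaling $\mu(x,m)$ absorbing the offset $e=\|x_d-x_d^m\|$ so that $u(x,m)=\gamma(x_d-x)=u(x,0)$. The ``main obstacle'' you flag --- guaranteeing that the jump actually occurs on $\mathcal{E}(x_d)\cap\mathcal{E}(x_d^m)$ rather than elsewhere on $\mathcal{E}(x_d^m)$ --- is precisely the step the paper dispatches by simply asserting it follows from placing $x_d^m$ on the cone's hat and initializing $m$ as in Remark~\ref{rem1}, so your treatment is, if anything, more explicit on that point.
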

 \begin{proof}
See Appendix \ref{appendix:Lem6}.
\end{proof}
\section{Numerical simulation}
In this section, we demonstrate the effectiveness of our approach by means of numerical simulations carried out in two spaces. The first space is two-dimensional with an obstacle centered on $c_1=[0\,-5]^{\top}$ and has a radius $r_1=2$. The second space is three-dimensional with an obstacle centered at $c_2=[1\,1\,1]^{\top}$ and has a radius $r_2=0.7$. In this space, we compare the trajectories obtained by our control with those generated by the hybrid control proposed in \cite{HybBerkaneECC2019}. We use the same design parameters as in \cite{HybBerkaneECC2019}. In both scenarios, the origin is the destination (red marker), and $e=\|x_d-x_d^m\|=0.1$ for $m=\pm 1$. The results are shown in Fig. \ref{fig:sim2d}, \ref{fig:sim3d}, and a simulation video of a three-dimensional scenario illustrating the process of selecting appropriate virtual destinations with different initial positions can be found at \url{https://youtu.be/AoyMIYIL1Qs}. As observed from Figure~\ref{fig:sim3d}
the trajectories obtained using our algorithm follow the shortest path to the destination but still guarantee convergence to the target from all initial conditions as in \cite{HybBerkaneECC2019}.
\begin{figure}[!h]
\centering
\includegraphics[scale=0.25]{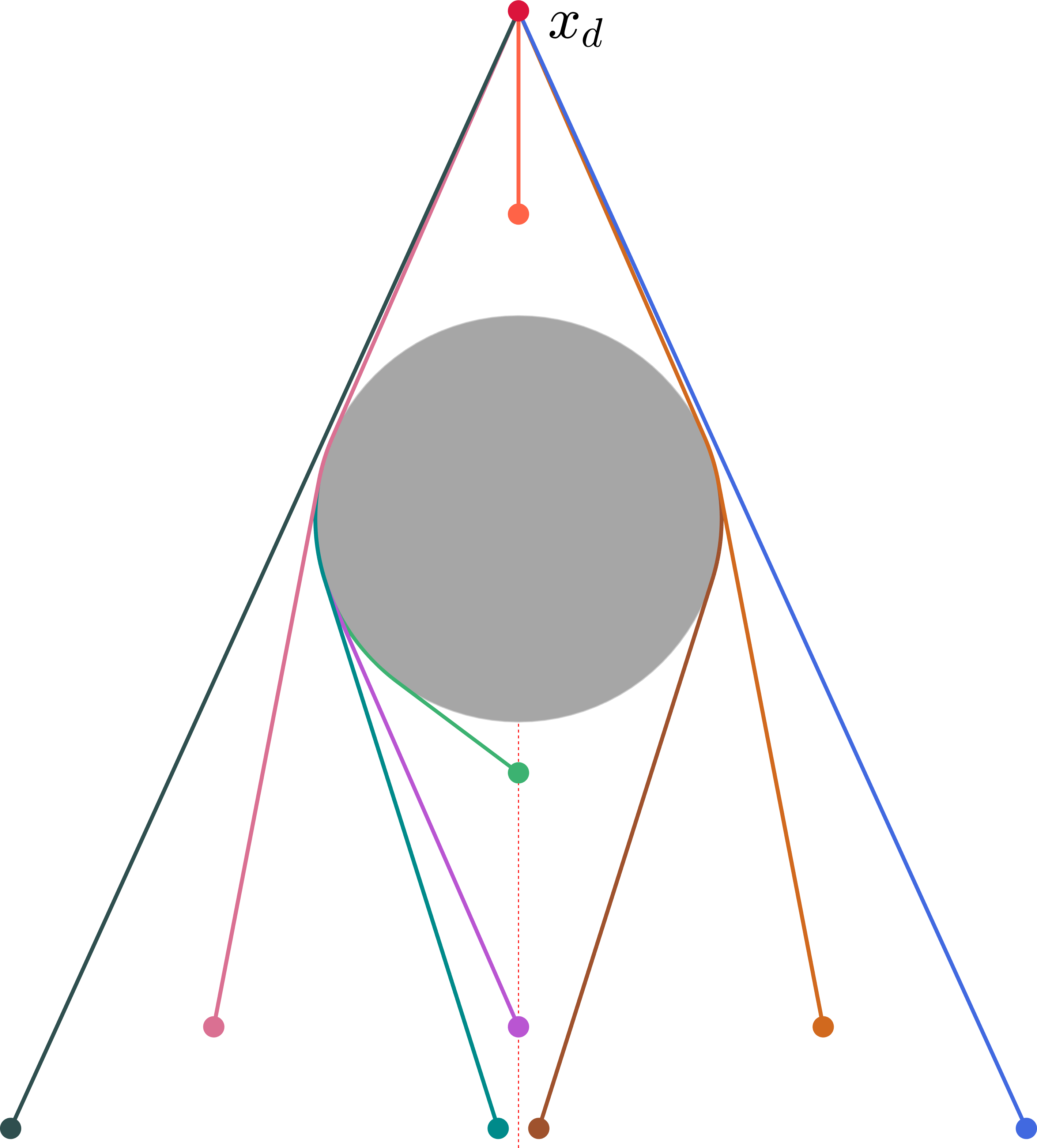}
\caption{Trajectories generated  by our proposed hybrid control law in 2D space from nine different initial conditions.}
\label{fig:sim2d}
\end{figure}
\begin{figure}[!h]
\centering
\includegraphics[scale=0.34]{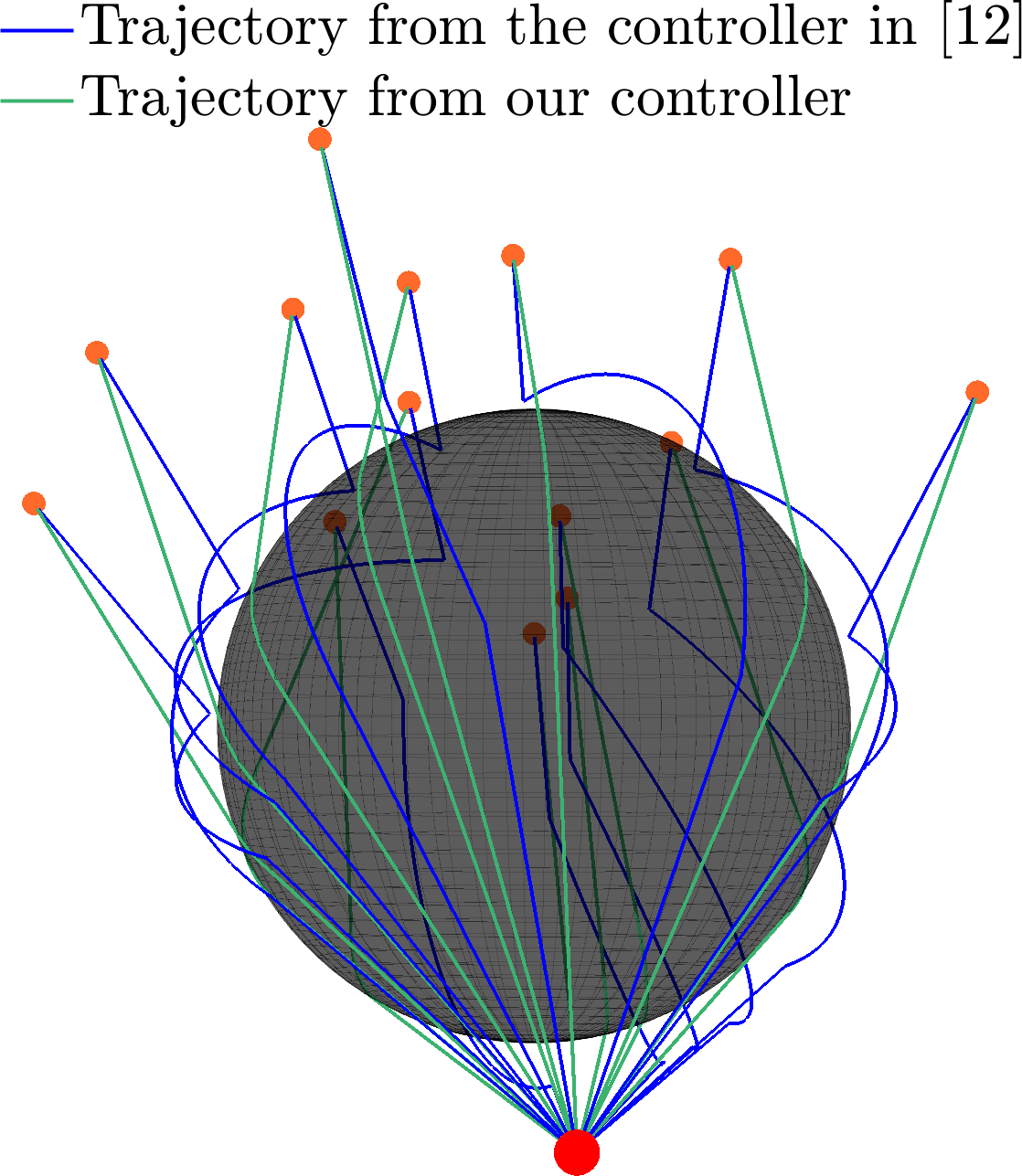}
\caption{Comparison of our proposed hybrid control law with the hybrid control proposed in \cite{HybBerkaneECC2019} in 3D space.}
\label{fig:sim3d}
\end{figure}
\section{Conclusion}
We have proposed a hybrid feedback control in $n$-dimensional spaces with a single spherical obstacle that guarantees global asymptotic stability of a target location and the optimality of the generated path. Two virtual destinations are considered to avoid the undesired equilibria issue via an appropriately designed switching mechanism that allows the robot to perform the obstacle avoidance maneuver along the shortest path within the cone enclosing the obstacle. Once the robot has a clear line of sight to the target, it heads straight to it. Moreover, the proposed control preserves continuity when switching between different operation modes. Taking into account multiple obstacles with complex shapes would be an interesting extension of the present work.  
\section{APPENDIX}
\subsection{Proof of Lemma \ref{lem2}}\label{appendix:Lem2}
The sets $\mathcal{F}$ and $\mathcal{J}$ defined in \eqref{flow_jump_sets} are closed subsets of $\mathbb{R}^n\times\{-1,0,1\}$. $\mathrm{F}$ is continuous on $\mathcal{F}$. To show that $\mathrm{J}$ is outer semi-continuous and locally bounded relative to $\mathcal{J}$, we first need to prove that $\mathrm{J}(x,m)\neq\varnothing$ for all $(x,m)\in\mathcal{J}$, which holds if $M(x,m)\neq\varnothing$ for all $(x,m)\in\mathcal{J}$. Given \eqref{jump_map}, we only need to show $M(x,0)\neq\varnothing$ for $(x,m)\in\mathcal{J}_0\times\{0\}$ which is true if $B(x)=\varnothing$ for all $x\in\mathbb{R}^n$. Since the angles $\varphi_1,\varphi_{-1}$ are selected according to Lemma \ref{lem1}, we have $\mathcal{C}_{\mathbb{R}^n}^{\leq}(c,v_1,\varphi_1)\cap\mathcal{C}_{\mathbb{R}^n}^{\leq}(c,v_{-1},\varphi_{-1})=\{c\}$, which implies that $\mathcal{C}_{\mathbb{R}^n}^{\geq}(c,v_1,\varphi_1)\cap\mathcal{C}_{\mathbb{R}^n}^{\geq}(c,v_{-1},\varphi_{-1})=\mathbb{R}^n$. Therefore, in view of \eqref{jump_map_B}, $B(x)\neq\varnothing$ for all $x\in\mathbb{R}^n$, and hence $\mathrm{J}(x,m)\neq\varnothing$ for all $(x,m)\in\mathcal{J}$. Furthermore, $\mathrm{J}$ has a closed graph as $B$ is allowed to be set-valued whenever $x\in\cap_{i=-1,1}\mathcal{C}_{\mathbb{R}^n}^{\leq}(c_k,v^k_i,\varphi_m)$. Thus,
according to \cite[Lemma 5.10]{Sanfelice}, $\mathrm{J}$ is outer semi-continuous and locally bounded relative to $\mathcal{J}$.
\subsection{Proof of Lemma \ref{lem3}}\label{appendix:Lem3}
First of all, it is clear from the definition of the flow and jump sets \eqref{fig:set_def} that their union covers the augmented state space, \textit{i.e.}, $\mathcal{K}=\mathcal{F}\cup\mathcal{J}$. Now, define the set $\mathcal{S}_{\mathcal{H}}(\mathcal{K})$ of all maximal solutions to the hybrid system \eqref{Cl_ctrl} represented by its data $\mathcal{H}$ with $\phi(0,0)\in\mathcal{K}$. Each solution $\phi \in\mathcal{S}_{\mathcal{H}}(\mathcal{K})$ has range $\text{rge}\,\phi \subset\mathcal{K}$. The augmented state space $\mathcal{K}$ is forward invariant for $\mathcal{H}$ if, for each $\phi(0,0) \in \mathcal{K}$, there exists one solution, and every $\phi\in \mathcal{S}_{\mathcal{H}}(\mathcal{K})$ is complete and has range $\text{rge}\,\phi \subset\mathcal{K}$ as per \cite[Definition~3.3]{Forward_invariance_sanfelice}. The forward invariance of $\mathcal{K}$ is then shown by the completeness of the solutions $\phi\in\mathcal{S}_{\mathcal{H}}(\mathcal{K})$ which we prove using \cite[Proposition~6.10]{Sanfelice}. We start by showing the following viability condition
\begin{align}\label{viability_cdt}
    \mathrm{F}(x,m)\cap\mathrm{T}_{\mathcal{F}}(x,m), \forall (x,m)\in\mathcal{F}\setminus\mathcal{J},
\end{align}
where $\mathrm{T}_{\mathcal{F}}(x,m)$ is Bouligand's tangent cone of the set $\mathcal{F}$ at position $x$ as defined in \cite[Definition~5.12]{Sanfelice}. Inspired by \cite[Appendix~1]{SoulaimaneHybTr}, we proceed as follows. Let $(x,m)\in\mathcal{F}\setminus\mathcal{J}$, which implies by \eqref{flow_jump_sets} that $x\in\mathcal{F}_m\setminus\mathcal
{J}_m$ for some $m\in\{-1,0,1\}$. Consider the two cases (modes) $m=0$ and $m=\{-1,1\}$. In the first mode $m=0$, when $x\in\mathring{\mathcal{F}_0}\setminus\mathcal{J}_0$, $\mathrm{T}_{\mathcal{F}}(x,0)=\mathbb{R}^n\times\{0\}$ and \eqref{viability_cdt} holds. As $\partial\mathcal{F}_0\setminus\mathcal{J}_0=\partial\mathcal{O}\setminus\mathcal{J}_0$, according to \eqref{18} and \eqref{set1}, $\mathrm{T}_{\mathcal{F}}(x,0)=\mathcal{P}_{\geq}(x,x-c)$ and $(c-x)^{\top}(x_d-x)<0$ where $\mathcal{P}_{=}(x,(x-c))$ is the hyperplane tangent to the obstacle at position $x$. Since $u(x,0)=-\gamma(x-x_d)$, $u(x,0)^{\top}(x-c)>0$ and \eqref{viability_cdt} holds. In the second mode $m\in\{-1,1\}$, for $x\in\mathring{\mathcal{F}_m}\setminus\mathcal{J}_m$, $\mathrm{T}_{\mathcal{F}}(x,0)=\mathbb{R}^n\times\{0\}$ and \eqref{viability_cdt} holds. Since $\partial\mathcal{F}_m\setminus\mathcal{J}_m\subset(\partial\mathcal{O}\setminus\mathcal{J}_m)$, according to \eqref{18}, \eqref{set2}, and \eqref{set3}, $\mathrm{T}_{\mathcal{F}}(x,m)=\mathcal{P}_{\geq}(x,x-c)$ and $(c-x)^{\top}(x_d^m-x)\geq0$ where $\mathcal{P}_{=}(x,(x-c))$ is the hyperplane tangent to the obstacle. From \eqref{hyb_ctrl}, for $x\in\partial\mathcal{O}$, $\theta(x)=\frac{\pi}{2}$ and $u(x,m)=\gamma\mu(x,m)\left(x_d^m-x-\|x_d^m-x\|\cos(\beta_m(x))\frac{c-x}{\|c-x\|}\right)$ where $\beta_m(x)=\angle(c-x,x_d^m-x)$ and $\mu(x,m)>0$. Therefore, $u(x,m)^{\top}(x-c)=\gamma\mu(x,m)((x_d^m-x)^{\top}(x-c)+(x_d^m-x)^{\top}(x-c))=0$ and \eqref{viability_cdt} holds for $m\in\{-1,1\}$. Therefore, according to \cite[Proposition 6.10]{Sanfelice}, since \eqref{viability_cdt} holds
for all $(x,m)\in\mathcal{F}\setminus\mathcal{J}$, there exists a nontrivial solution to the hybrid system $\mathcal{H}$ for
every initial condition in $\mathcal{K}$. Now, we will show that the hybrid closed-loop system \eqref{Cl_ctrl} has no finite escape-time over the flows. Consider the Lyapunov function $V(x)=\frac{1}{2}||x-x_d^m||^2$ for $m\in\{-1,0,1\}$ whose time-derivative is given by
\begin{align*}
    \Dot{V}(x)&=\frac{\partial V(x)}{\partial x}^{\top}\Dot{x}\\
    &=\begin{cases}
    (x-x_d^m)^{\top}u_d^m(x),\qquad(x,m)\in\mathcal{F}_0\times\{0\}\\
 \mu(x,m)(x-x_d^m)^\top u_m(x),\\
 \qquad(x,m)\in\mathcal{F}_m\times\{-1,1\}
\end{cases}\\
&=\begin{cases}
    -\gamma (x-x_d^m)^{\top}(x-x_d^m),\qquad(x,m)\in\mathcal{F}_0\times\{0\}\\
    K(x-x_d^m)^{\top}\left(x_d^m-x-\tau_m(x)\frac{c-x}{\|c-x\|}\right),\\
 \qquad(x,m)\in\mathcal{F}_m\times\{-1,1\}
\end{cases}\\
&=\begin{cases}
    -\gamma \|x-x_d^m\|^2,\qquad(x,m)\in\mathcal{F}_0\times\{0\}\\
    -K\left(\|x-x_d^m\|^{2}-\tau_m(x)\frac{(x_d^m-x)^{\top}(c-x)}{\|c-x\|}\right),\\
 \qquad(x,m)\in\mathcal{F}_m\times\{-1,1\}
\end{cases}\\
&=\begin{cases}
    -\gamma \|x-x_d^m\|^2,\qquad(x,m)\in\mathcal{F}_0\times\{0\}\\
    -K\left(\|x-x_d^m\|^{2}-\|x_d^m-x\|\tau_m(x)\cos(\beta_m(x))\right),\\
 \qquad(x,m)\in\mathcal{F}_m\times\{-1,1\}
\end{cases}\\
&=\begin{cases}
    -\gamma \|x-x_d^m\|^2,\qquad(x,m)\in\mathcal{F}_0\times\{0\}\\
    -K\|x-x_d^m\|^{2}\frac{\sin(\beta_m(x))}{\sin(\theta(x))}\cos(\theta(x)-\beta_m(x)),\\
\qquad(x,m)\in\mathcal{F}_m\times\{-1,1\}
\end{cases}
\end{align*}
where $K=\gamma\mu(x,m)$, and we used the fact that $\sin(\theta(x))-\sin(\theta(x)-\beta_m(x))\cos(\beta_m(x))=\sin(\beta_m(x))\cos(\theta(x)-\beta_m(x))$, $0<\theta(x)\leq\frac{\pi}{2}$ and $0\leq\beta_m(x)\leq\theta(x)$. It is clear that $\Dot{V}(x)\leq0$ for all $(x,m)\in\mathcal{F}_m\times\{-1,0,1\}$.
Hence, finite escape time can not occur for $x\in\mathcal{F}_m$ as this would make $(x-x_d^m)^{\top}(x-x_d^m)$ grow unbounded and would contradict the fact that $\Dot{V}(x)\leq 0$. Therefore, all maximal solutions do not have finite escape times. Moreover, according to \eqref{jump_dyn},
$x^+ = x$, and from the definitions \eqref{jump_map},
\eqref{jump_map_B}, it follows that $\mathrm{J}(\mathcal{J})\subset \mathcal{K}$. Thus, solutions of the hybrid system \eqref{Cl_ctrl} cannot leave $\mathcal{K}$ through jumps and, as per \cite[Proposition 6.10]{Sanfelice}, all maximal solutions
are complete.
\subsection{Proof of Theorem \ref{the1}}\label{appendix:the1}
Item i) follows directly from Lemma \ref{lem3}. Using \cite[Definition~7.1]{Sanfelice}, we prove item ii) by first showing the stability of $\mathcal{A}$, and then its global attractivity. Since $x_d\in\mathring{\mathcal{X}}$, there exists $\bar{\epsilon}>0$ such that $\mathcal{B}(x_d,\bar{\epsilon})\cap\mathcal{O}=0$. As per the sets definitions in \eqref{set1}, $\mathcal{B}(x_d,\epsilon)\subset\mathcal{F}_0$ for all $\epsilon\in[0,\bar{\epsilon}]$. Thus, $\mathcal{B}(x_d,\epsilon)\cap\mathcal{J}_0=\varnothing$, and $x$ evolves under $\Dot{x}=-\gamma(x-x_d)$, which implies forward invariance of the set $\mathcal{B}:=\mathcal{B}(x_d,\epsilon)\times\{-1,0,1\}$. Therefore, according to \cite[Definition~7.1]{Sanfelice}, the set $\mathcal{A}$ is stable for the hybrid system \eqref{Cl_ctrl}. Now, let us prove the global attractivity of $\mathcal{A}$ by showing that for all initial condition $(x(0,0),m(0,0))\in\mathcal{F}_m\times\{-1,1\}$, there exists a hybrid time $t>0$, $j\in\mathbb{N}\setminus\{0\}$ such that $x(t,j)\in\mathcal{F}_0$. Consider the positive definite function $V(x)=\frac{1}{2}||x-x_d^m||^2$ defined in Appendix \ref{appendix:Lem3} for $m\in\{-1,1\}$. Given that $\Dot{V}(x)\leq0$, as shown in Appendix \ref{appendix:Lem3}, and $\Dot{V}(x)=0$ only if $x\in\mathcal{L}(c,x_d^m)$ ($\beta_m(x)=0$), which is excluded from the set $\mathcal{F}_m$ for $m\in\{-1,1\}$, one can conclude that $V(x)<0$ for all $(x,m)\in\mathcal{F}_m\times\{-1,1\}$. Therefore, as $x_d^m\in\mathcal{F}\setminus\mathcal{F}_m$, there exists a hybrid time $t>0$, $j\in\mathbb{N}\setminus\{0\}$,  such that $x(t,j)$ will leave $\mathcal{F}_m$ and the mode $m(t,j)$ will switch to the {\it straight} mode where the flow $\Dot{x}=-\gamma(x-x_d)$ guarantees the global attractivity of the equilibrium set $\mathcal{A}$.
\subsection{Proof of Lemma \ref{lem4}}\label{appendix:Lem4}
\textbf{Case 1}: The {\it straight} mode ($m=0$): the velocity vector is given by $u(x,0)=\gamma(x_d^0-x)$, which implies that the trajectory is the line segment joining $x(t_0^0,j_0^0)$ and $x(t_f^0,j_f^0)=x_d^0=x_d$, $t_f^0,j_f^0\to\infty$. Therefore, if $x_0^0\in\mathcal{X}\setminus\mathcal{L}(x_d,c)$ \big(resp. $x_0^0\in\mathcal{X}\cap\mathcal{L}(x_d,c)$\big), $x(t,j)\in\mathcal{PL}(x_d^0,c,x_0^0)$ \big(resp. $x(t,j)\in\mathcal{PL}(x_d^0,c,y)$, $y\in\mathcal{X}\setminus\mathcal{L}(x_d,c)$\big) for all $(t,j)\in[t_0^0,t_f^0]\times[j_0^0,j_f^0]$. 
\\ \textbf{Case 2}: The {\it projection} mode ($m\in\{-1,1\}$): the velocity vector is given by $u(x,m)=\gamma(x_d^m-x)-\tau_m(x)\frac{c-x}{\|c-x\|}$, a function of the two vectors $(c-x)$ and $(x_d^m-x)$, implying that the motion is bi-dimensional. Therefore, if $x_0^0\in\mathcal{X}\setminus\mathcal{L}(x_d,c)$ \big(resp. $x_0^0\in\mathcal{X}\cap\mathcal{L}(x_d,c)$\big), the trajectory $x(t,j)\in\mathcal{PL}(x_d^m,c,x_0^m)$ \big(resp. $x(t,j)\in\mathcal{PL}(x_d^0,c,y)$, $y\in\mathcal{X}\setminus\mathcal{L}(x_d,c)$\big) for all $(t,j)\in[t_0^m,t_f^m]\times[j_0^m,j_f^m]$.
\subsection{Proof of Lemma \ref{lem5}}\label{appendix:Lem5}
Let $\bar{p}\in\mathbb{N}$, be the number of jumps of the hybrid system for an initial condition $(x(0,0),m)\in\mathcal{X}\times\{-1,1\}$. Let $\mathbb{T}_p:=[t_0^p,t_f^p]\times[j_0^p,j_f^p]$, $p\in\{0,\dots,\bar{p}\}$ be the hybrid time-interval between jump $p$ and jump $p+1$, and $\mathbb{T}_{\bar{p}}:=[t_0^{\bar{p}},+\infty]\times[j_0^{\bar{p}},+\infty]$ be the last hybrid time-interval. Note that $x_d^0=x_d$ and $x_0^p=x(t_0^p,j_0^p)$. We prove by induction the following two cases:\\
\textbf{Case 1}: $(x(0,0),m(0,0))\in\mathcal{X}\setminus\mathcal{L}(x_d,c)\times\{-1,1\}$. Let the virtual destinations be selected as $x_d^m\in\left(\mathcal{C}^{=}(x_d,c-x_d,\theta(x_d))\setminus\mathcal{E}(x_d)\right)\cap\mathcal{PL}(x_d,c,x(0,0))$. Define property $P(p)$ as follow: the trajectory after jump $p$ is included in the plane $\mathcal{PL}(x_d,c,x(0,0))$ during the hybrid time interval $\mathbb{T}_p$. Let us show that $P(0)$ is true. As per Lemma \ref{lem4}, $x(t)\in\mathcal{PL}(x_d^m,c,x_0^0)$ for $t\in\mathbb{T}_0$. Since $x_0^0=x(0,0)$ and $x_d^m,x(0,0)\in\mathcal{PL}(x_d,c,x(0,0))$, $\mathcal{PL}(x_d^m,c,x_0^0)=\mathcal{PL}(x_d,c,x(0,0))$. Assuming that $P(p)$ is true,  we will show that $P(p+1)$ is also true. Since $x(t)\in\mathcal{PL}(x_d^m,c,x_0^{p+1})$ for $t\in\mathbb{T}_{p+1}$ and  $m\in\{-1,0,1\}$, as per Lemma \ref{lem4}, and $x_0^{p+1}=x(t_f^{p})$, $x_d^m,x(t_f^{p})\in\mathcal{PL}(x_d,c,x(0,0))$, one can conclude that $x(t,j)\in\mathcal{PL}(x_d,c,x(0,0))$ for $t\in\mathbb{T}_{p+1}$. Therefore, $P(p+1)$ is true, completing the induction.\\
\textbf{Case 2}: $(x(0,0),m(0,0))\in\mathcal{X}\cap\mathcal{L}(x_d,c)\times\{-1,1\}$. Let the virtual destinations be selected as $x_d^m\in(\mathcal{C}^{=}(x_d,c-x_d,\theta(x_d))\setminus\mathcal{E}(x_d))\cap\mathcal{PL}(x_d,c,y)$ where $y\in\mathcal{X}\setminus\mathcal{L}$. The same proof as in \textbf{case 1)} is reproduced with the plane of motion $\mathcal{PL}(x_d,c,y)$.\\
Since $\cup_{p=0}^{\bar{p}}\mathbb{T}_p=[0,+\infty)\times\mathbb{N}$,  $x(t,j)\in\mathcal{PL}(x_d,c,x(0,0))$ for $t\geq0,j\in\mathbb{N}$, in \textbf{case 1} (resp. $x(t,j)\in\mathcal{PL}(x_d,c,y)$ for $t\geq0$ and $y\in\mathcal{X}\setminus\mathcal{L}$ in \textbf{case 2}), which concludes the proof.
\subsection{Proof of Lemma \ref{lem6}}\label{appendix:Lem6}
As the control law $u(x,m)$ is continuous during the flow and only switches from the {\it projection} mode to the {\it straight} mode, we only need to verify its continuity at the switching instances, which corresponds to the \textit{exit set}. Since the virtual destinations are on the hat of the cone enclosing obstacle $\mathcal{O}$ and following the initialization scheme of item i) in Remark \ref{rem1}, the jump will occur when $x\in\mathcal{E}(x_d)\cap\mathcal{E}(x_d^m)$ for $m\in\{-1,1\}$. Thus, since the space is two-dimensional, $\frac{(x_d-x)^\top}{\|x_d-x\|}\frac{(x_d^m-x)}{\|x_d^m-x\|}=1$ and $\|x_d-x\|=\|x_d^m-x\|+\|x_d-x_d^m\|$ ({\it i.e.,} $x,x_d$, and $x_d^m$ are aligned). In addition, $\beta_m(x)=\angle(c-x,x_d^m-x)=\theta(x)$ when $x\in\mathcal{E}(x_d^m)$. Therefore, $u(x,m)=\mu(x,m)u_d^m(x)=\gamma\frac{\|x-x_d^m\|+e}{\|x-x_d^m\|}(x_d^m-x)$, and since $e=\|x_d-x_d^m\|$, one has $u(x,m)=\gamma\|x_d-x\|\frac{x_d^m-x}{\|x_d^m-x\|}=\gamma(x_d-x)=u_d(x)=u(x,0)$. 

\addtolength{\textheight}{-12cm}   

\bibliographystyle{ieeetr}
\bibliography{references}

\begin{thebibliography}{10}

\bibitem{khatib}
O.~Khatib, ``{Real time obstacle avoidance for manipulators and mobile robots},'' {\em The International Journal of Robotics Research}, vol.~5, no.~1, pp.~90--99, 1986.

\bibitem{k_R_90}
D.~E. Koditchek and E.~D. Rimon, ``{Robot Navigation Functions on Manifolds with Boundary},'' {\em ADVANCES IN APPLIED MATHEMATICS}, vol.~11, pp.~412--442, 1990.

\bibitem{R_k_92}
E.~D. Rimon and D.~E. Koditchek, ``{Exact Robot Navigation Using Artificial Potential Functions},'' {\em IEEE Transactions on Robotics and Automation}, vol.~8, no.~5, pp.~501--518, 1992.

\bibitem{LoizouNT4}
S.~G. Loizou, ``The navigation transformation,'' {\em IEEE Transactions on Robotics}, vol.~33, no.~6, pp.~1516--1523, 2017.

\bibitem{LoizouNT3}
N.~Constantinou and S.~G. Loizou, ``Robot navigation on star worlds using a single-step navigation transformation,'' in {\em 2020 59th IEEE Conference on Decision and Control (CDC)}, pp.~1537--1542, 2020.

\bibitem{SnsNF6}
S.~Paternain, D.~E. Koditschek, and A.~Ribeiro, ``Navigation functions for convex potentials in a space with convex obstacles,'' {\em IEEE Transactions on Automatic Control}, vol.~63, no.~9, pp.~2944--2959, 2018.

\bibitem{Arslan2019}
O.~Arslan and D.~E. Koditschek, ``Sensor-based reactive navigation in unknown convex sphere worlds,'' {\em The International Journal of Robotics Research}, vol.~38, no.~2-3, pp.~196--223, 2019.

\bibitem{vasilopoulos1}
V.~G. Vasilopoulos and D.~E. Koditschek, ``{Reactive Navigation in Partially Known Non-Convex Environments},'' in {\em 13th International Workshop on the Algorithmic Foundations of Robotics (WAFR)}, 2018.

\bibitem{Vasilopoulos2}
V.~G. Vasilopoulos, G.~Pavlakos, K.~Schmeckpeper, K.~Daniilidis, and D.~E. Koditschek, ``Reactive navigation in partially familiar planar environments using semantic perceptual feedback,'' {\em ArXiv}, vol.~abs/2002.08946, 2020.

\bibitem{ACC23}
I.~Cheniouni, A.~Tayebi, and S.~Berkane, ``Safe and quasi-optimal autonomous navigation in sphere worlds,'' in {\em 2023 American Control Conference (ACC)}, pp.~2678--2683, 2023.

\bibitem{Ishak2023}
I.~Cheniouni, S.~Berkane, and A.~Tayebi, ``Safe and quasi-optimal autonomous navigation in environments with convex obstacles,'' {\em arXiv:2308.13425}, 2023.

\bibitem{HybBerkaneECC2019}
S.~{Berkane}, A.~{Bisoffi}, and D.~V. {Dimarogonas}, ``A hybrid controller for obstacle avoidance in an $n$-dimensional euclidean space,'' in {\em 2019 18th European Control Conference (ECC)}, pp.~764--769, 2019.

\bibitem{SoulaimaneHybTr}
S.~Berkane, A.~Bisoffi, and D.~V. Dimarogonas, ``Obstacle avoidance via hybrid feedback,'' {\em IEEE Transactions on Automatic Control}, vol.~67, no.~1, pp.~512--519, 2022.

\bibitem{Mayur2022}
M.~Sawant, S.~Berkane, I.~Polushin, and A.~Tayebi, ``Hybrid feedback for autonomous navigation in planar environments with convex obstacles,'' {\em IEEE Transactions on Automatic Control}, pp.~1--16, 2023.

\bibitem{Mayur2023}
M.~Sawant, A.~Tayebi, and I.~Polushin, ``Hybrid feedback control design for non-convex obstacle avoidance,'' {\em arXiv:2304.10598}, 2023.

\bibitem{Matrix_Analysis_and_Applied_Linear_Algebra}
C.~D. Meyer, {\em Matrix Analysis and Applied Linear Algebra}.
\newblock USA: Society for Industrial and Applied Mathematics, 2000.

\bibitem{Sanfelice}
R.~Goebel, R.~G. Sanfelice, and A.~R. Teel, {\em Hybrid Dynamical Systems: Modeling, Stability, and Robustness}.
\newblock Princeton University Press, 2012.

\bibitem{berkane2017design}
S.~Berkane and A.~Tayebi, ``On the design of attitude complementary filters on $so(3)$,'' {\em IEEE Transactions on Automatic Control}, vol.~63, no.~3, pp.~880--887, 2017.

\bibitem{Forward_invariance_sanfelice}
J.~Chai and R.~G. Sanfelice, ``Forward invariance of sets for hybrid dynamical systems (part i),'' {\em IEEE Transactions on Automatic Control}, vol.~64, no.~6, pp.~2426--2441, 2019.

\end{thebibliography}
\end{document}